\newtheorem{definition}{Definition}
\newtheorem{proposition}{Proposition}
\newtheorem{lemma}[proposition]{Lemma}
\newenvironment{proof}{\noindent \textbf{{Proof~} }}{\hfill $\blacksquare$}
\newcounter{remark}
\newcounter{example}
\definecolor{darkblue}{RGB}{0,76,156}
\definecolor{darkkblue}{RGB}{0,0,153}
\definecolor{blue2}{RGB}{102,178,255}
\definecolor{darkred}{RGB}{195,0,0}
\newcommand{\nc}{\newcommand}
\nc{\rnc}{\renewcommand}
\newcommand{\bra}[1]{\langle#1|}
\newcommand{\ket}[1]{|#1\rangle}
\newcommand{\ketbra}[2]{|#1\rangle\!\langle#2|}
\newcommand{\braket}[2]{\langle#1|#2\rangle}
\newcommand{\braandket}[3]{\langle #1|#2|#3\rangle}
\newcommand{\CNOT}{\operatorname{CNOT}}
\nc{\proj}[1]{| #1\rangle\!\langle #1 |}
\nc{\avg}[1]{\langle#1\rangle}
\nc{\rank}{\operatorname{Rank}}
\nc{\smfrac}[2]{\mbox{$\frac{#1}{#2}$}}
\nc{\tr}{\operatorname{Tr}}
\nc{\ox}{\otimes}
\nc{\dg}{\dagger}
\nc{\dn}{\downarrow}
\nc{\csupp}{{\operatorname{csupp}}}
\nc{\qsupp}{{\operatorname{qsupp}}}
\nc{\E}{{\mathbb{E}}}
\nc{\var}{{\operatorname{Var}}}
\nc{\rar}{\rightarrow}
\nc{\lrar}{\longrightarrow}
\nc{\polylog}{{\operatorname{polylog}}}
\nc{\wt}{{\operatorname{wt}}}
\nc{\supp}{{\operatorname{supp}}}
\nc{\argmin}{{\operatorname{argmin}}}
\nc{\id}{{\operatorname{id}}}
\nc{\CHSH}{{\operatorname{CHSH}}}
\newmdenv[skipabove=7pt,
skipbelow=7pt,
backgroundcolor=darkblue!15,
innerleftmargin=5pt,
innerrightmargin=5pt,
innertopmargin=5pt,
leftmargin=0cm,
rightmargin=0cm,
innerbottommargin=5pt,
linewidth=1pt]{tBox}
\definecolor{beamer}{rgb}{0.2,0.2,0.7}
\definecolor{colorone}{rgb}{1,0.36,0.03}
\definecolor{colortwo}{rgb}{0.4,0.77,0.17}
\definecolor{colorthree}{rgb}{0.01,0.51,0.93}
\definecolor{colorfour}{rgb}{0.47,0.26,0.58}
\definecolor{colorfive}{rgb}{0.12,0.55,0.16}
\definecolor{tensorblue}{rgb}{0.8,0.9,1}
\tikzset{ten/.style={fill=tensorblue}}
\newcommand{\diagram}[1]{ \begin{array}{cc}\begin{tikzpicture}[scale=.5,every node/.style={sloped,allow upside down},baseline={([yshift=+0ex]current bounding box.center)}] #1 \end{tikzpicture} \end{array} }
\newcommand{\BCNOT}{\widetilde{\CNOT}}
\newcommand{\ansatz}{SEA}
\newcommand{\sfid}{F(\ket{\phi}, \ket{\psi})}
\newcommand{\bin}[2]{\Bar{#1}_{#2}}
\begin{document}
	\title{Mitigating barren plateaus of variational quantum eigensolvers}
	\author{Xia Liu}
	\affiliation{Institute for Quantum Computing, Baidu Research, Beijing 100193, China}
	\author{Geng Liu}
	\affiliation{Institute for Quantum Computing, Baidu Research, Beijing 100193, China}
	\author{Jiaxin Huang}
	\affiliation{Institute for Quantum Computing, Baidu Research, Beijing 100193, China}
	\author{Hao-Kai Zhang}
	\affiliation{Institute for Quantum Computing, Baidu Research, Beijing 100193, China}
	\affiliation{Institute for Advanced Study, Tsinghua University, Beijing 100084, China}
	\author{Xin Wang}
	\email{wangxin73@baidu.com}
	\affiliation{Institute for Quantum Computing, Baidu Research, Beijing 100193, China}
	
	\begin{abstract}
		Variational quantum algorithms (VQAs) are expected to establish valuable applications on near-term quantum computers. However, recent works have pointed out that the performance of VQAs greatly relies on the expressibility of the ansatzes and is seriously limited by optimization issues such as barren plateaus (i.e., vanishing gradients). This work proposes the state efficient ansatz (SEA) for accurate ground state preparation with improved trainability. We show that the SEA can generate an arbitrary pure state with much fewer parameters than a universal ansatz, making it efficient for tasks like ground state estimation. Then, we prove that barren plateaus can be efficiently mitigated by the SEA and the trainability can be further improved most quadratically by flexibly adjusting the entangling capability of the SEA. Finally, we investigate a plethora of examples in ground state estimation where we obtain significant improvements in the magnitude of cost gradient and the convergence speed. 
	\end{abstract}
	\date{\today}
	\maketitle
	\textbf{\textit{Introduction.}}---
	Quantum computers are expected to achieve quantum advantages~\cite{preskill2012quantum} in solving valuable problems
	\cite{lloyd1996universal,harrow2009quantum,Childs2010,Montanaro2016}. Before arriving at universal quantum computing, a key direction is to explore the power of noisy intermediate-scale quantum (NISQ)~\cite{preskill2018quantum} devices in important fields such as quantum chemistry~\cite{cao2019quantum,mcardle2020quantum} and quantum machine learning~\cite{Schuld2021,Biamonte2017a,Huang2021,Jerbi2021,Mitarai2018}.
	Recent results~\cite{arute2019quantum,Wu2021a,Zhong2020,zhong2021phase, Abbas2021,madsen2022quantum} have shown that quantum advantages in specific tasks can be achieved using such devices.
	
	One common paradigm for designing quantum solutions using NISQ devices is variational quantum algorithms (VQAs) ~\cite{cerezo2021variational,Bharti2021,Endo2020}. VQAs are promising to deliver applications in many important topics, including ground state preparation~\cite{peruzzo2014variational, cao2019quantum, mcardle2020quantum}, quantum data compression~\cite{Romero2017,Wang2020d}, machine learning~\cite{Cong2018,Li2021-classifier,Schuld2021,li2021vsql,Caro2021,Farhi2018}, and combinatorial optimization~\cite{farhi2014quantum,Zhou2018b}. Combining the advantages of classical computers and quantum devices, VQAs adopt parameterized quantum circuits (PQCs, also known as ansatzes)~\cite{benedetti2019parameterized} and utilize classical computers to optimize the parameters to minimize the cost functions that are designed for solving target problems.
	
	Among the numerous VQAs, the variational quantum eigensolver (VQE) for ground state estimation is a central one of both practical and theoretical interests. For VQE, one common approach is to utilize fixed structure ansatzes, such as the hardware-efficient ansatz~\cite{kandala2017hardware} and the unitary coupled cluster ansatz~\cite{hoffmann1988unitary, mcardle2020quantum, lee2018generalized}, which require a large depth to achieve high accuracy when the problem scale is large. On the other hand, there are adaptive structure ansatzes~\cite{grimsley2019adaptive, tang2021qubit,ryabinkin2019iterative,Bharti2021a,Grimsley2022}, which usually suffer from relatively high costs of both quantum and classical resources.
	
	\begin{figure}[t]
		\centering
		\includegraphics[width=0.45\textwidth]{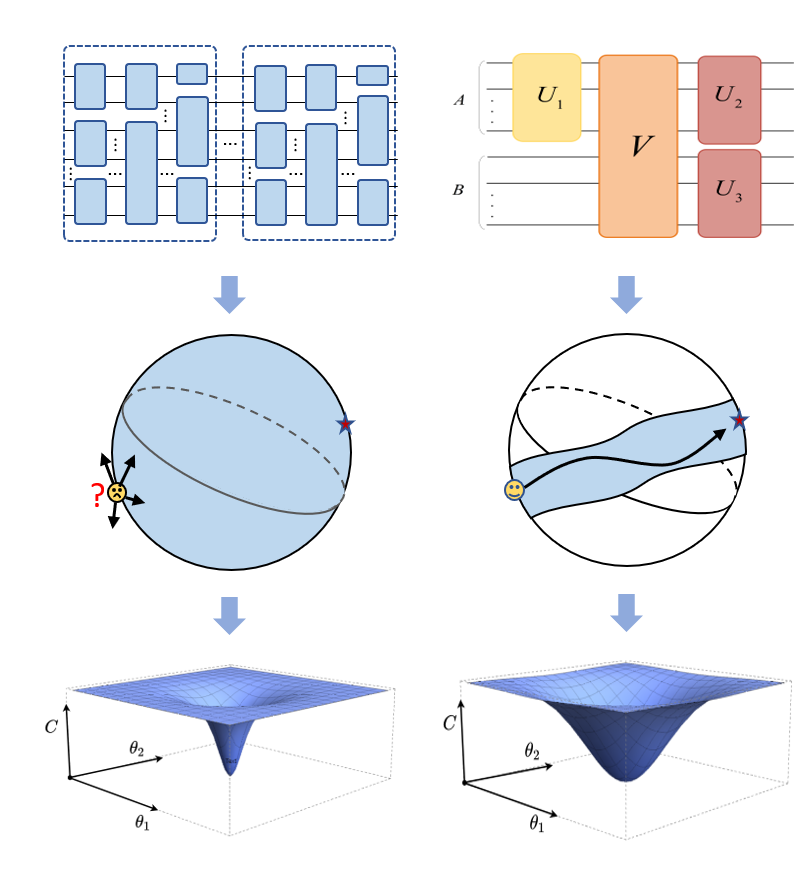}
		\caption{\textbf{Illustration of our main result.} The left column demonstrates the difficulty in the optimization process of general ansatzes, which have high expressibility. The optimization process of State Efficient Ansatz ({\ansatz}) is depicted in the right column, with its conceptual framework represented in the upper right, where $A$ and $B$ are two subsystems. Schmidt coefficient layer $U_1$, entangling layer $V$, and local basis changing layer $U_2, U_3$, are the three parts of {\ansatz}. The key feature of SEA is reducing the optional range of paths, leading to a more favorable landscape, as shown at the bottom.}
	\label{fig:TA-structure}
\end{figure}

Despite the success of VQE, expressibility~\cite{haug2021capacity,Du2022,holmes2022connecting} and trainability~\cite{mcclean2018barren,zhang2021toward,Zhang2022} are still two critical challenges of designing ansatzes for it. To improve the accuracy of the algorithms, deep ansatzes are usually preferred for their strong expressibility.
This could be seen in the same spirit of deep neural networks in machine learning. However, recent works~\cite{holmes2022connecting,sim2019expressibility,nakaji2021expressibility} imply that strong expressibility will lead to poor trainability due to the barren plateau phenomenon~\cite{mcclean2018barren, Sharma2020c, Marrero2020}, i.e., exponentially vanishing gradients in training with respect to the number of qubits.
This technical bottleneck seriously restricts the scalability of VQE. 
Despite a plethora of recent attempts to address the barren plateaus through adjusting initialization~\cite{Grant2019,Verdon2019}, cost functions~\cite{cerezo2021cost,kieferova2021quantum}, and architectures~\cite{Pesah2021}, there is still a strong need for extending the scalability of VQE by improving the trainability while preserving the effectiveness.


To overcome these challenges, we propose the State Efficient Ansatz ({\ansatz}), as shown in Fig.~\ref{fig:TA-structure}.
By removing the redundancy between the sets of universal unitary and universal pure quantum state, {\ansatz} can represent an arbitrary pure quantum state with fewer parameters, resulting in its efficiency in all the tasks that are essentially learning a quantum state, which we call the state-oriented tasks (such as VQE). 
Moreover, by adjusting the number of CNOT gates in the entangling layer (see Fig.~\ref{fig:TA-CNOT}), SEA has the ability to learn a target ground state with low bipartite entanglement, where we proved that barren plateaus can be effectively mitigated by the SEA. 
To be specific, the SEA can effectively enhance the gradient magnitude up to a square root of the scaling of the global $2$-design case. 
We further investigate a plethora of examples in ground state estimation through numerical experiments and establish evident improvements 
in both the overall behavior and the magnitude of the cost gradient .


\textbf{\textit{State Efficient Ansatz.}}---\label{sec: SEA}
The intuition of {\ansatz} is to prune the redundant expressibility of an ansatz in the state-oriented tasks by utilizing the Schmidt representation~\cite{Nielsen2010} of pure states, which also inspires the tree tensor network~\cite{shi2006classical, nagaj2008quantum, tagliacozzo2009simulation, nakatani2013efficient}. Specifically, the input system has two subsystems $A$ and $B$, and the ansatz comprises three parts. The first part is the \textit{Schmidt coefficient layer} $U_1$ acting on subsystem $A$. The second part is the \textit{entangling layer} $V$ acting on the whole system $AB$ to create entanglement between two subsystems. The last part is the \textit{local basis changing layer (LBC layer)}, which consists of two local circuits, $U_2$ and $U_3$, applied to the two subsystems respectively.
The whole structure of {\ansatz} is shown in the upper right of Fig.~\ref{fig:TA-structure} and can be written as 
\begin{equation}
S(\bm{\theta})\equiv(U_2(\bm{\theta_2})\otimes U_3(\bm{\theta_3}))V(U_1(\bm{\theta_1})\otimes I),
\label{eq:TA-general}
\end{equation}
where $S$ is the unitary representation of {\ansatz}, $\bm{\theta}=\{\bm{\theta_1},\bm{\theta_2},\bm{\theta_3}\}$, and each $\bm{\theta_i}(i=1,2,3)$ is a parameter vector. 

Now we explain how SEA reduces redundant expressibility from the perspective of degrees of freedom (DOF). For state-oriented tasks, the DOF of the $2N$-qubit pure state set is $2^{2N+1}-2$ and the DOF of the unitary group $\mathcal{U}(2^{2N})$ is $4^{2N}$~\cite{osti_4271130}. Therefore, using a universal $2N$-qubit PQC that can represent a universal unitary set to generate a universal pure states set is overkill. In contrast, SEA has a quadratic advantage in the parametric DOF compared to universal PQCs. Note that a universal $N$-qubit PQC $U_i$ has $O(4^N)$ DOF~\cite{PhysRevA.103.L030401}. Therefore, a $2N$-qubit {\ansatz} with universal $U_i(i=1,2,3)$ also has $O(4^N)$ parametric DOF. In comparison, a $2N$-qubit universal PQC has $O(4^{2N})$ parametric DOF.
Therefore, {\ansatz} requires quadratically fewer parameters than a general PQC of the same dimension while retaining the ability to generate an arbitrary pure state. It is also worth pointing out that SEA can reach the optimal DOF to generate pure states set, which is the $2\times 4^N -2$, by removing the redundant freedom in $U_1,U_2,U_3$. In this sense, we call it \textit{state efficient ansatz}. 

After clarifying that SEA is efficient in generating an arbitrary pure state, we here elaborate the effectiveness of {\ansatz}. For simplicity, we assume both subsystems $A$ and $B$ have $N$ qubits and let $\ket{0}^{\otimes 2N}$ be an initial state sent into {\ansatz}. An example of SEA is shown in Fig.~\ref{fig:TA-CNOT}. To be specific, $U_1$ is the Schmidt coefficient layer with $U_1\ket{0}^{\otimes N}=\sum_{k=0}^{2^N-1}\lambda_k\ket{k}_A$, where $\{\ket{k}\}_{k=0}^{2^N-1}$ is the computational basis. At the entangling layer, we set a composition of $N \CNOT$s controlled and targeted on the qubit-pairs $\{(i, N+i)\}_{i=0}^{N-1}$. At the last layer, we apply $U_2$ and $U_3$ on subsystems $A$ and $B$, respectively, such that $U_2\ket{k}_A=\ket{v_k}_A$ and $U_3\ket{k}_B=\ket{v_k}_B$. Then after SEA, $\ket{0}^{\otimes 2N}$ will evolve into $\sum_{k=0}^{2^N-1}\lambda_k\ket{v_k}_A\ket{v_k}_B$. Note that this exactly forms a Schmidt decomposition with tunable Schmidt coefficients and bases. Since any pure state has a form of Schmidt decomposition, {\ansatz} can evolve an initial state $\ket{0}^{\otimes 2N}$ into an arbitrary $2N$-qubit pure state if $U_1$ can generate an arbitrary $N$-qubit pure state (that is, $U_1$ possessing \textit{universal wavefunction expressibility}) and $U_2, U_3$ are universal. The whole process of SEA acting on the initial state $\ket{0}^{\otimes 2N}$ is shown in Fig.~\ref{fig:TA-CNOT}.

\begin{figure}[t]
\centering
\includegraphics[width=0.35\textwidth]{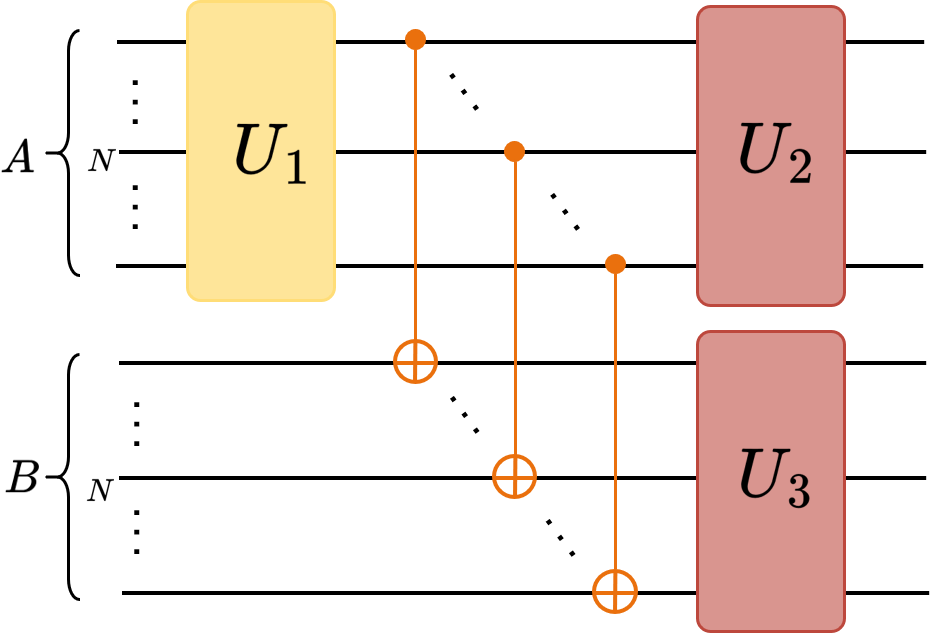}
\caption{\textbf{An established case of {\ansatz}.} Schmidt coefficient layer $U_1$ has universal wavefunction expressibility. The entangling layer consists of N CNOTs, and the LBC layer comprises two local universal PQCs, $U_2$ and $U_3$. The evolution of $\ket{0}^{\otimes 2N}$ under this SEA is as follows. After Schmidt coefficient layer, we obtain $\sum_{k=0}^{2^N-1}\lambda_k\ket{k}_A\ket{0}^{\otimes N}$. Then we get $\sum_{k=0}^{2^N-1}\lambda_k\ket{k}_A\ket{k}_B$ after implementing $N$ CNOTs. Finally, we obtain the output state $\sum_{k=0}^{2^N-1}\lambda_k\ket{v_k}_A\ket{v_k}_B$ after the LBC layer.}
\label{fig:TA-CNOT}
\end{figure}

The above analysis implies that {\ansatz} with a $U_1$ of the universal wavefunction expressibility and $U_2, U_3$ being universal can accurately solve state-oriented tasks. Note that an $N$-qubit {\ansatz} has the universal wavefunction expressibility under the same conditions, so we can use it as the Schmidt coefficient layer of a $2N$-qubit {\ansatz} to generate an arbitrary $2N$-qubit pure state.

According to SEA's ability to evolve $\ket{0}^{\otimes 2N}$ into an arbitrary $2N$-qubit pure state, we can conclude that SEA is effective for the VQE, which aims to variationally find the ground state of a given Hamiltonian.
To be specific, for any $2N$-qubit Hamiltonian $H$, the goal of the VQE is to find the ground state energy $E_0=\min_{\ket{\psi}} \bra{\psi}H\ket{\psi}$, where the minimization is over the $2N$-qubit pure state set. Since a $2N$-qubit {\ansatz} could generate an arbitrary pure state, it can obtain $E_0$ of a Hamiltonian after the optimization. This exactly illustrates the effectiveness of {\ansatz} when used in VQE. More details can be found in Sec.~II
of the Supplemental Material.

When dealing with a specific task, we can further optimize SEA if we know certain information about the entanglement of the target quantum state. The SEA mentioned before can generate a pure state with full Schmidt rank. However, if the target quantum state is weakly entangled, we can change the structure of the entangling layer to reduce the cost without sacrificing performance. The effectiveness of SEA, in this case, is described in the following proposition.


\begin{proposition}
\label{pro: truncation-general}
If $U_1$ can generate any $N$-qubit pure state that is a superposition of at most $K$ computational basis states, then for any $\ket{\phi}$, there exists an {\ansatz} output state $\ket{\psi}$
with $F(\ket{\phi},\ket{\psi})\geq \min\left\{\frac{K}{r}, 1\right\}$,
where $F(\ket{\phi},\ket{\psi})$ is the fidelity between $\ket{\phi}$ and $\ket{\psi}$, and $r$ is the Schmidt rank of $\ket\phi$.
\end{proposition}

We can tune the value $K$ in Proposition~\ref{pro: truncation-general}, which corresponds to the entangling capability, by adjusting the number of CNOT gates in the entangling layer. This proposition guarantees that SEA with fewer CNOT gates still has a strong expressibility for learning weakly entangled quantum states.
More experiments are shown in \textit{Numerical Simulation of Experiments}. 

So far, our analysis of SEA's effectiveness assumes perfect training. However, in practice, trainability is also an important factor to be considered. Many common ansatzes suffer from the notorious barren plateaus phenomenon~\cite{mcclean2018barren}. It was shown that the cost gradient vanishes exponentially in the number of qubits for a randomly initialized PQC with sufficient depth, if the unitary ensemble generated by the PQC is sufficiently random to accord with the Haar distribution over the unitary group up to the second moment, i.e., forming a unitary $2$-design. The following proposition implies that SEA does not form a unitary 2-design even if it has enough expressibility for state-oriented tasks.

\begin{proposition}
\label{pro:non-2-design}
{\ansatz} with $U_i(i=1,2,3)$ being local 2-design and $\BCNOT$ as entangling layer does not form a 2-design on the global system.
\end{proposition}

As we have known that a large class of random PQCs will end up forming a 2-design when the depth is large \cite{mcclean2018barren} and this is the usual case in VQAs,
Proposition~\ref{pro:non-2-design} implies that even though the local structures of SEA have strong expressibility by forming 2-designs, SEA will still not form a global 2-design. Thus SEA could avoid the known definite zone of barren plateaus without losing effectiveness on VQE. Moreover, SEA will maintain this property even if we change the number of CNOTs in the entangling layer. In fact, with less expressibility, SEA will be farther from being a 2-design. The proof can be found in Sec.~IV
of the Supplemental Material.

However, not being a unitary 2-design does not guarantee good trainability. In the following, we conduct a more rigorous analysis of SEA's trainability.

\textbf{\textit{Mitigating Barren Plateaus.}}---\label{sec:TA2-design}
In this section we estimate the trainability and expressibility of the SEA by computing the variance of the cost gradient~\cite{mcclean2018barren,cerezo2021cost} and the frame potential~\cite{sim2019expressibility}, respectively, given that the sub-blocks in the SEA are sampled from local unitary $2$-design ensembles. We show that by tuning the number of $\CNOT$ gates in the entangling layer, the SEA could continuously change the entanglement of the prepared state, and hence flexibly balance the trade-off between trainability and expressibility~\cite{holmes2022connecting}. Especially, if the target state is weakly entangled, the SEA could gain a square root advantage over the common circuit ansatzes such as the hardware-efficient ansatz, etc.

Trainability is one of the most concerning problems in VQAs. 
If the adopted ansatz forms a unitary $2$-design, the optimization process will be crippled due to the exponentially vanishing gradients, resulting in a severe issue on the trainability of VQAs. Specifically, we consider the VQE cost function $C(\bm{\theta})$ with an initial state $\ket{0}^{\otimes 2N}$, an objective operator $H$ and a circuit ansatz $\mathbf{U}(\bm{\theta})$ on $2N$ qubits and the gradient component $\partial_\mu C=\frac{\partial C}{\partial \theta_{\mu}}$ with respect to the variational parameter $\theta_\mu$. Here we assume that all sub-blocks with variational parameters in the SEA, i.e. $U_1,U_2$, and $U_3$, are sampled from local unitary $2$-designs so that we can integrate them using the Weingarten calculus~\cite{Collins2006,Puchaa2017}. Note that if $\theta_\mu$ locates at $U_i$, we assume that the two parts in $U_i$ are split by the gate corresponding to $\theta_\mu$ (like in Fig.~\ref{fig:RU_QNN}) are both local $2$-designs. The trainability of the SEA under the above assumption is described by the following proposition, the proof of which is presented in Sec.~V
of the Supplemental Material.
\begin{proposition}\label{prop:gradient-variance}
For an SEA defined on $2N$ qubits with all sub-blocks being local $2$-designs, the variance of the cost gradient scales with the number of qubits as
\begin{equation}
	\var_{\rm SEA}[\partial_\mu C]\in\mathcal{O}(2^{-(N+M)}),
\end{equation}
where $M\in\{0,1,...,N\}$ denotes the number of $\CNOT$ gates used in the entangling layer and the variance is taken over all SEA sub-blocks independently.
\end{proposition}

This result notably implies that, by tuning the number of CNOT gates in the entangling layer, the SEA can effectively enhance the gradient magnitude up to a square root of the scaling of the global $2$-design case $\var_{\rm Haar}[\partial_\mu C]\in\mathcal{O}(2^{-2N})$,
which is verified by the numerical results in Fig.~\ref{Fig: BP-sub.1} and Fig.~\ref{fig:BP_specific}. This is achieved by sacrificing the ability of expressing the highly entangled states since the number of CNOT gates determines the upper bound to the Schmidt rank of the prepared state. In other words, if the target state is actually low-entangled,
e.g., the Schmidt rank does not scale or scales slowly with the number of qubits, the SEA could effectively mitigate barren plateaus and at the same time keep the capability of expressing the target state. For the VQE focusing on the ground state preparation task, this low-entanglement assumption is reasonable and supported by the entanglement area law for the ground states of gapped systems and the logarithmic growth for the ground states of gapless systems, at least in the $1$-dimensional case~\cite{Eisert2008}.

The trade-off between the trainability and the expressibility~\cite{holmes2022connecting} motivates us to further quantitatively analyze the expressibility of SEA using the $t$-degree frame potential~\cite{sim2019expressibility}, which is defined by
\begin{equation}
\mathcal{F}^{(t)} = \E \left[ \left| \braandket{0}{\mathbf{U}^\dagger \mathbf{V}}{0} \right|^{2t} \right],
\end{equation}
where $\ket{0}$ denotes the zero state of the whole system and the expectation with respect to $\mathbf{U},\mathbf{V}$ is taken over two copies of the ansatz ensemble independently.
The $t$-degree frame potential measures the expressibility of the ansatz ensemble up to the $t$-th moment. It is known that the Haar random ensemble achieves the minimum frame potential $\mathcal{F}^{(t)}_{\rm Haar}$ for each $t$, and this value can be achieved if and only if the ansatz ensemble is a $t$-design ensemble~\cite{renes2004symmetric}. In addition, a smaller frame potential implies a stronger expressibility. Under the assumption of local $2$-designs, we have the following proposition for the expressibility of the SEA. The proof can be found in Sec.~VI
of the Supplemental Material.


\begin{proposition}\label{prop:frame-potential}
For an SEA defined on $2N$ qubits with all sub-blocks being local $2$-designs, the first and second frame potential satisfies
\begin{align}
	&\mathcal{F}^{(1)}_{\rm SEA} = 2^{-2N},\\
	&\mathcal{F}^{(2)}_{\rm SEA} \in \mathcal{O}(2^{-4N}).
\end{align}
\end{proposition}

The first frame potential of the SEA is identical to that of a global state $1$-design $\mathcal{F}^{(1)}_{\rm Haar}=2^{-2N}$ and the second frame potential is in the same scaling with that of a global $2$-design $\mathcal{F}^{(2)}_{\rm Haar}=2^{1-2N}\left( 2^{-2N} + 1 \right)$, in spite that the exact value of $\mathcal{F}^{(2)}_{\rm SEA}$ is larger than $\mathcal{F}^{(2)}_{\rm Haar}$. So we can regard the SEA as a better ansatz in the sense that it has better trainability while the expressibility is not sacrificed too much. From the simulation results in Fig.~\ref{fig:frame_potential}, we can see that the second frame potential of the SEA is pretty close to the optimal value.

\begin{figure}[t]
\centering
\includegraphics[width = 0.5\textwidth]{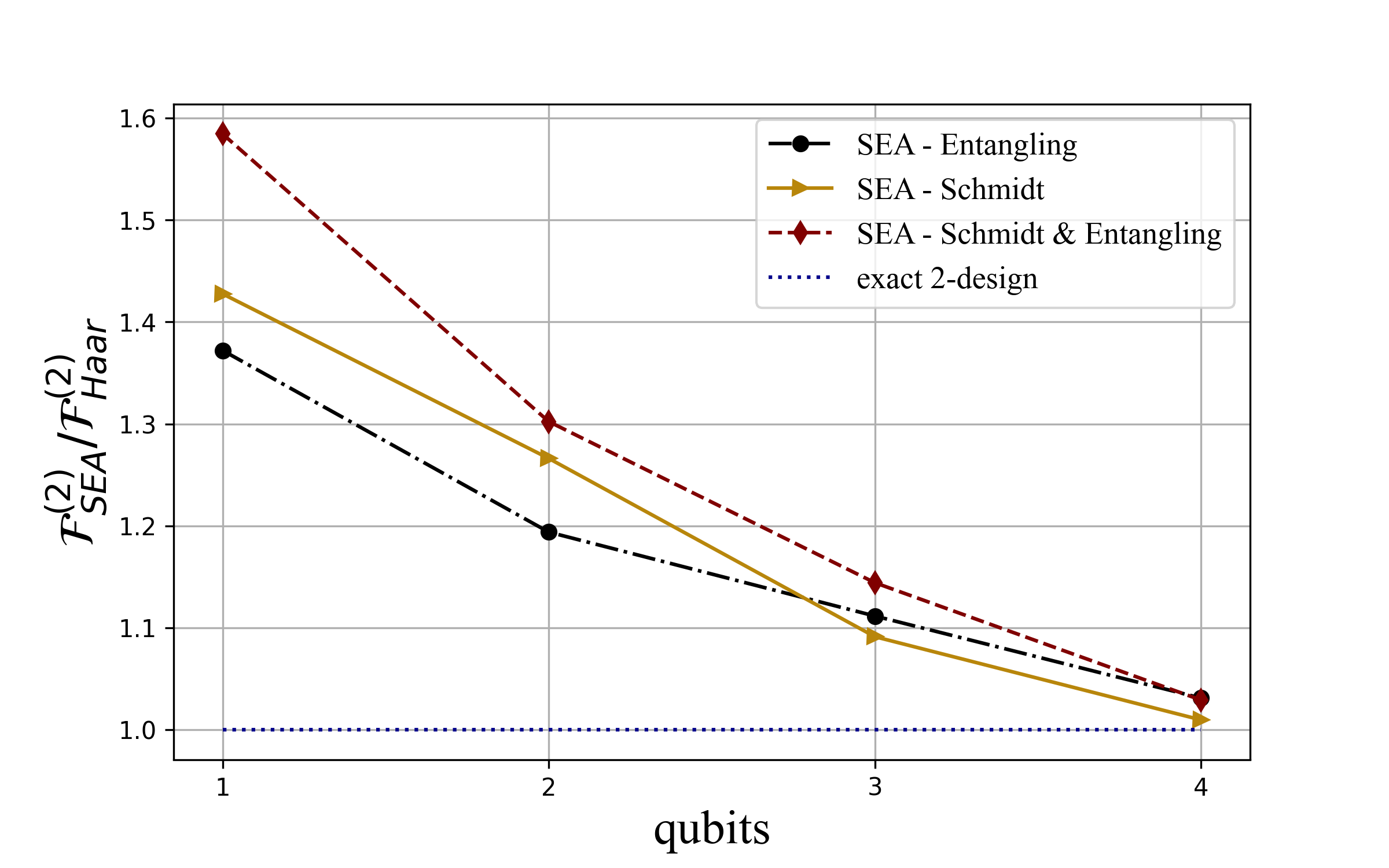}
\caption{\textbf{Second frame potentials of 8-qubit SEA with different Schmidt coefficient layers and entangling layers.}
	Yellow triangles, black dots, and red diamonds represent the frame potentials of 8-qubit SEA that change the number of qubits used in the Schmidt coefficient layer, entangling layer, and both two layers, respectively. Blue dotted line represents the frame potential of an exactly 2-design circuit. Each value is estimated using an ensemble of 5000 unitaries.}
\label{fig:frame_potential}
\end{figure}

\textbf{\textit{Numerical Simulation of Experiments.}}---\label{sec: application}
To verify the advantages of {\ansatz} on both efficiency and trainability, we investigate the performance as well as the magnitude of cost gradient of SEA and other ansatzes in estimating the ground state energy of chemistry and physics models by carrying out numerical simulations of experiments.
All simulations are performed using the Paddle Quantum~\cite{Paddle} toolkit on the PaddlePaddle Deep Learning Platform~\cite{Ma2019}. 


\paragraph*{Applying to chemistry and physics models.} 
To make sure the SEA can meet the aforementioned requirements for VQE, we adopt alternating layered ansatz (ALT)~\cite{cerezo2021cost}, which is a famous class in the hardware-efficient ansatz, for the Schmidt coefficient layer and the LBC layer in the following parts. An example of ALT is shown in Fig.~\ref{fig:alt}.


Here we use $14$-qubit dinitrogen (N$_2$) molecule and the $14$-qubit Heisenberg model as examples for numerical simulations of VQE. 
The structure of {\ansatz} is shown in Fig.~\ref{fig:TA-CNOT}, where $U_1$, $U_2$, and $U_3$ are all $7$-qubit ALTs with a depth of $30$. For different Hamiltonians, we could 
adjust the $K$ in Proposition~\ref{pro: truncation-general} to obtain SEAs with different entangling capabilities. Considering that the N$_2$ molecule and the Heisenberg model we choose are both weakly entangled, we use two kinds of {\ansatz} named {\ansatz}$_2$ and {\ansatz}$_3$ to learn these models' ground state energies. To be specific, {\ansatz}$_2$ is the $14$-qubit {\ansatz} of Schmidt coefficient layer as a $2$-qubit ALT and entangling layer as $2$ CNOTs, which means a composition of $2$ CNOTs controlled and targeted on the qubit-pairs $\{(0, 7), (1, 8)\}$. {\ansatz}$_3$ is the $14$-qubit {\ansatz} of Schmidt coefficient layer as a $3$-qubit ALT and entangling layer as $3$ CNOTs, which implement on the qubit-pairs $\{(i,7+i)\}_{i=0}^2$. As mentioned before, we can also set Schmidt coefficient layer as a subSEA. To verify this point, we use an SEA with Schmidt coefficient layer being the $7$-qubit subSEA, which is called SEA$\_$Sch, in our simulations.

For the purpose of studying the advantages of {\ansatz}, we compare a $14$-qubit {\ansatz} with a $14$-qubit ALT and a $14$-qubit random circuit~\cite{mcclean2018barren} in the task of VQE with a similar number of parameters by setting different depth for different ansatzes. The specific calculation methods of parameter numbers are given in the Sec.VII
of the Supplemental Material. The reason for choosing an ALT and a random circuit is that they both have a strong expressibility and they can be regarded as a $2$-design when their depth are large~\cite{mcclean2018barren,cerezo2021cost}. After setting up the ansatzes, we employ a stochastic gradient descent optimizer to iteratively update parameters for $400$ iterations.

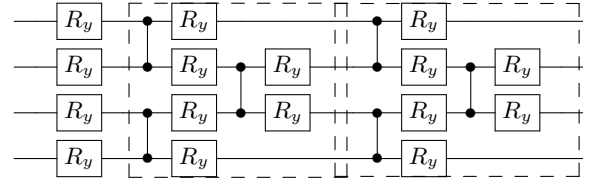
\begin{figure}
\centering
$$\Qcircuit @C=.9em @R=.4em{
	&\qw &\gate{R_y} &\qw &\ctrl{1} &\gate{R_y} &\qw      &\qw        &\qw &\qw        &\ctrl{1} &\gate{R_y} &\qw      &\qw         &\qw&\qw  \\
	&\qw &\gate{R_y} &\qw &\ctrl{0}   &\gate{R_y} &\ctrl{1} &\gate{R_y} &\qw &\qw        &\ctrl{0}   &\gate{R_y} &\ctrl{1} &\gate{R_y}  &\qw&\qw  \\
	&\qw &\gate{R_y} &\qw &\ctrl{1} &\gate{R_y} &\ctrl{0}   &\gate{R_y} &\qw &\qw        &\ctrl{1} &\gate{R_y} &\ctrl{0}   &\gate{R_y}  &\qw&\qw  \\
	&\qw &\gate{R_y} &\qw &\ctrl{0}  &\gate{R_y} &\qw &\qw   &\qw  &\qw &\ctrl{0}   &\gate{R_y} &\qw &\qw   &\qw&\qw 
	\gategroup{1}{5}{4}{9}{1.3em}{--} \gategroup{1}{10}{4}{15}{1.5em}{--}
}$$     
\caption{An example of a $4$-qubit ALT with $2$ layers. Each layer is composed of the rotation operator gate $R_y$ and control-Z gate.}
\label{fig:alt}
\end{figure}



\begin{figure}[t]
\centering
\includegraphics[width=0.5\textwidth]{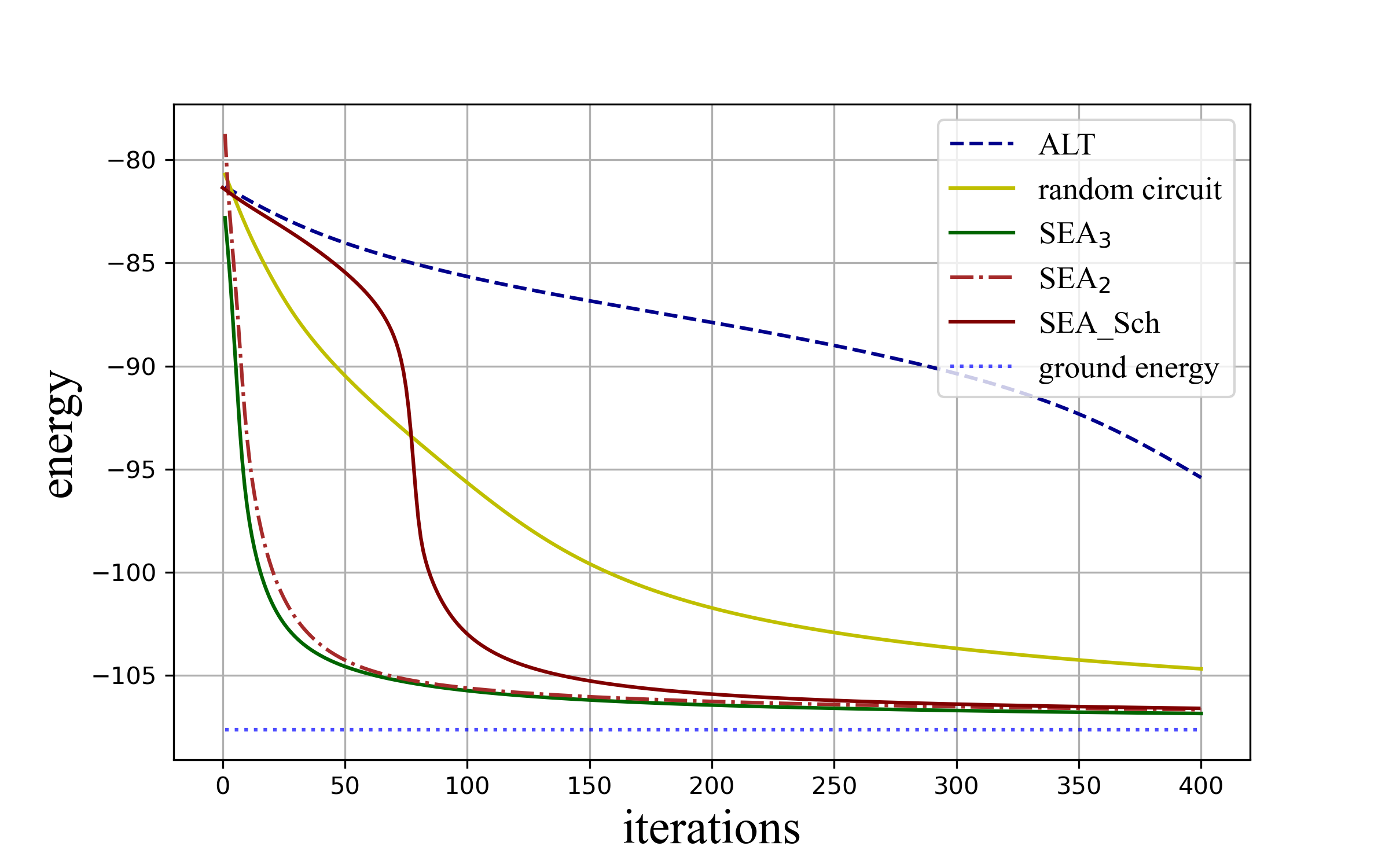}
\caption{\textbf{Numerical simulations of VQE on N$_2$ ($14$-qubit).}
	The blue dotted line is the theoretical ground energy of N$_2$, and the lines from top to bottom represent the experimental results of ALT, the random circuit, SEA of Schmidt coefficient layer
	as subSEA, {\ansatz} with $2$ CNOTs, and {\ansatz} with $3$ CNOTs, respectively. $j$($j=2, 3$) CNOTs means that we set a composition of $j \CNOT$s controlled and targeted on the qubit-pairs $\{(i, N+i)\}_{i=0}^{j-1}$.}
\label{fig:VQE-N2}
\end{figure}

\begin{figure}[t]
\centering
\includegraphics[width=0.5\textwidth]{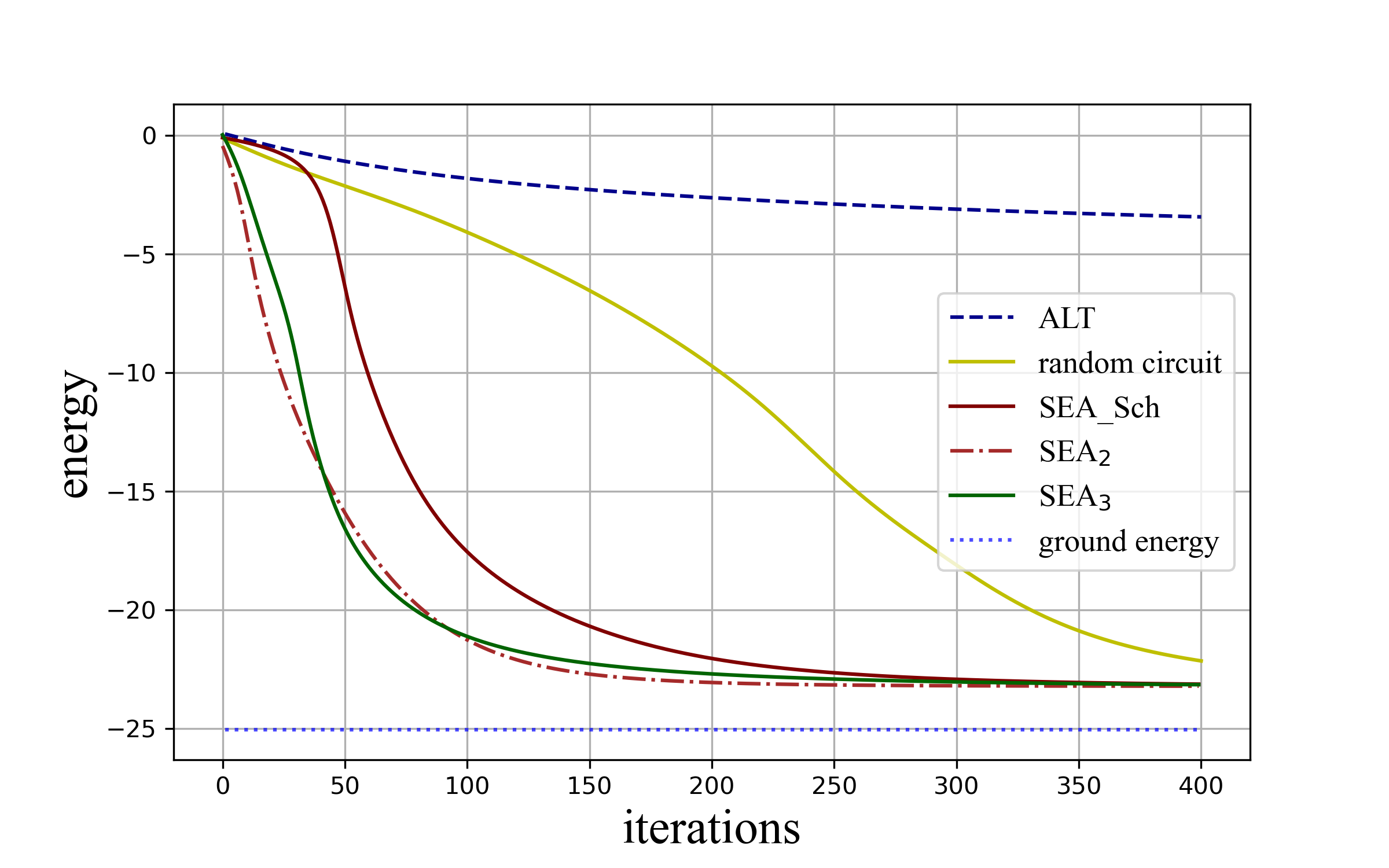}
\caption{\textbf{Numerical simulations of VQE on Heisenberg model ($14$-qubit).} The Hamiltonian is chosen to be the $14$-qubit $1$-dimensional spin-$1/2$ antiferromagnetic Heisenberg model with periodical boundary condition, i.e., $H=\sum_{i=1}^{14}\left( X_{i} X_{i+1} + Y_{i} Y_{i+1} + Z_{i} Z_{i+1} \right)$. The blue dotted line is the theoretical ground energy of the Hamiltonian $H$, and the other lines from top to bottom represent the results of the ALT, random circuit, SEA of Schmidt coefficient layer
	as subSEA, SEA with 3 CNOTs and SEA with 2 CNOTs, respectively.}
\label{fig:VQE_Heisenberg_14_sgd}
\end{figure}
The simulation results of N$_2$ molecule and the Heisenberg model are shown in Fig.~\ref{fig:VQE-N2} and Fig.~\ref{fig:VQE_Heisenberg_14_sgd}, respectively. Both figures illustrate the efficiency of SEA on VQE. We can observe that with a similar number of parameters, there is a visible gap in the accuracy of estimating the ground energy by different ansatzes. Both {\ansatz}$_2$ and {\ansatz}$_3$ converge rapidly and get an approximate value of ground energy while the results of the ALT and random circuit are hard to be optimized in $400$ iterations. Furthermore, it is also valid to replace the three parts of {\ansatz} with other structures. The SEA$\_$Sch in Fig.~\ref{fig:VQE-N2} and Fig.~\ref{fig:VQE_Heisenberg_14_sgd} both show that when we choose Schmidt coefficient layer as subSEA, this kind of SEA also has a good performance on VQE like SEA$_2$ or SEA$_3$. In addition to SEA$\_$Sch, the Sec.~VII
of the Supplemental Material presents the result of {\ansatz} whose Schmidt coefficient layer is $R_y(\bm{\theta})^{\otimes N}$. We can also see that this kind of {\ansatz} has better performance than ALT and random circuit, which further verifies the effectiveness of {\ansatz} in practical applications.

From the numerical experiment of VQE, we can see that {\ansatz} has an advantage in efficiency. Since we only use $2$ or $3$ CNOTs and adjust the structure of Schmidt coefficient layer to obtain good results, the experiment results also support Proposition~\ref{pro: truncation-general}.

\paragraph*{Gradient experiments.}
For both experiments on gradients with different numbers of qubits, we consider the $1$-dimensional spin-$1/2$ antiferromagnetic Heisenberg model where the Hamiltonian is $H=\sum_{i=1}^{2N}( X_{i} X_{i+1} + Y_{i} Y_{i+1} + Z_{i} Z_{i+1})$ with periodic boundary condition. We use different ansatzes and calculate the gradient of the cost function $C(\bm{\theta}) = \bra{0}^{\otimes 2N}U(\bm{\theta})^{\dagger}HU(\bm{\theta})\ket{0}^{\otimes 2N}$ with respect to each parameter, where $U(\bm{\theta})$ represents the unitary of the ansatz.

We first conduct experiments with the ansatzes used in the previous numerical simulations of VQE. The SEAs we chose are SEA$_1$ and SEA$_{\lfloor N/2\rfloor}$, which still adopt ALT as the Schmidt coefficient layer and the LBC layer.
For each $N$, ALT and random circuit with similar numbers of parameters as SEAs are selected for comparison.

\begin{figure}[t]
\centering
\includegraphics[width=0.5\textwidth]{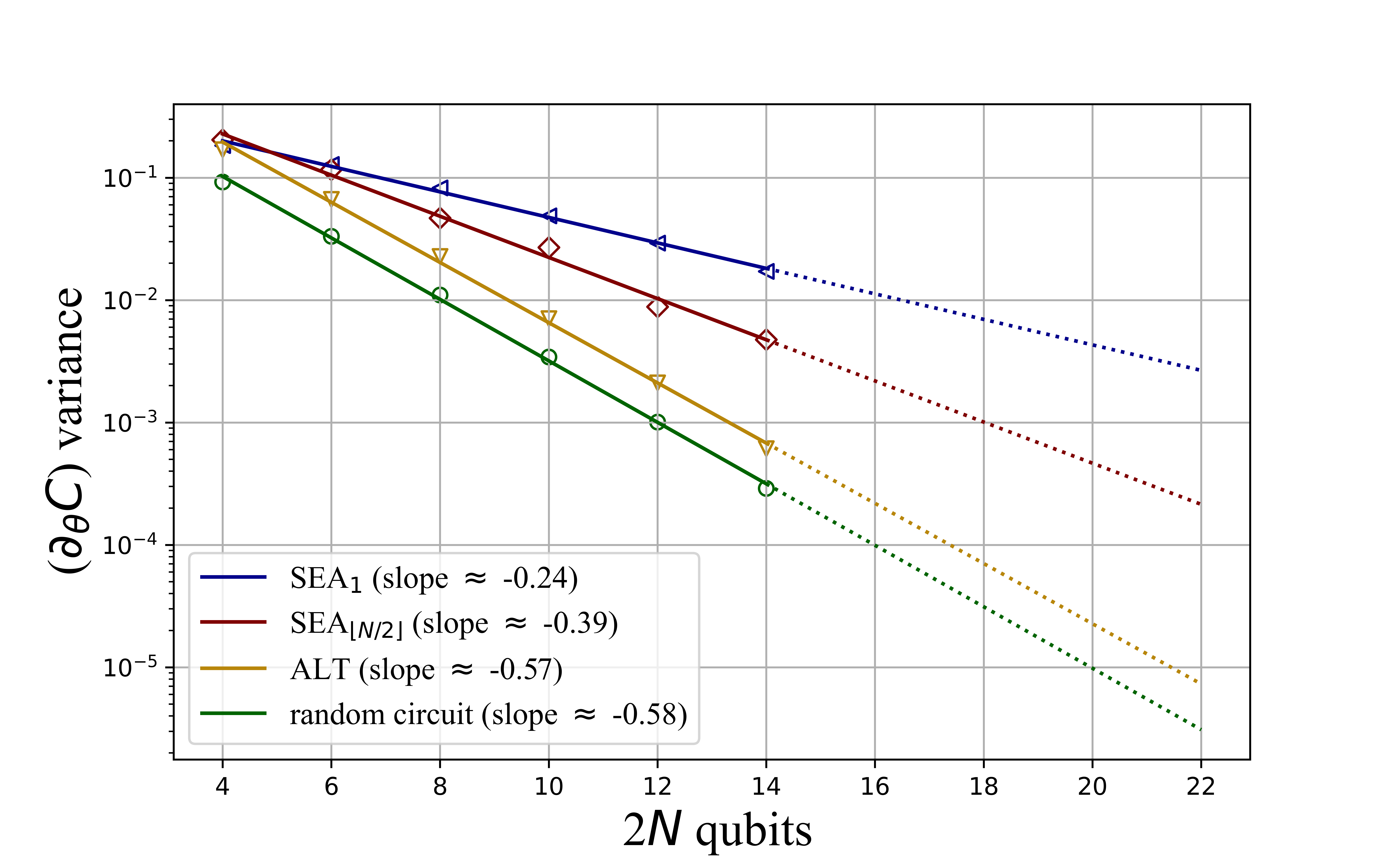}
\caption{\textbf{Comparison of the scaling of the average variance of gradients between different ansatzes on the Heisenberg model}. This shows the semi-log plot of the average value of the variance of each parameters in the circuit. The depth of each part of SEA is 30 and we ensure different ansatzes have similar numbers of parameters by setting different depth. The solid part of the fitted lines represents the range we experiment with, while the dashed part represents the expected performance on a larger range.}
\label{Fig: BP-sub.1}
\end{figure}

The average variances of parameters of different ansatzes are compared in Fig.~\ref{Fig: BP-sub.1}.
We also extract the variance of the largest partial derivative in each sample, which is shown in the Sec.~VII
of the Supplemental Material. It is clearly demonstrated that all these variances decay exponentially as the number of qubits increases. However, the slopes of the red and the blue lines are nearly half of the slopes of the other lines, which means the {\ansatz} has a much lower rate of decay of variance. 


To better illustrate how our structure increases the scaling of variance, we further inspect the gradient with respect to the parameter in different parts of SEA. As shown in Fig.~\ref{fig:RU_QNN}, we look into the gradients with respect to the parameters of the $R_y$ gates in the Schmidt coefficient layer and the LBC layer.
As a comparison, we also consider a single parameter $R_y$ gate in the middle of a circuit that forms a 2-design with sufficient depth.
The two parts split by the gate are both local 2-designs and
are represented by Haar random unitaries in our experiments. Fig.~\ref{fig:BP_specific} summarizes the results of the variance of parameters located in the Schmidt coefficient layer and the LBC layer, respectively. 

\begin{figure}[t]
\centering
\includegraphics[width=0.45\textwidth]{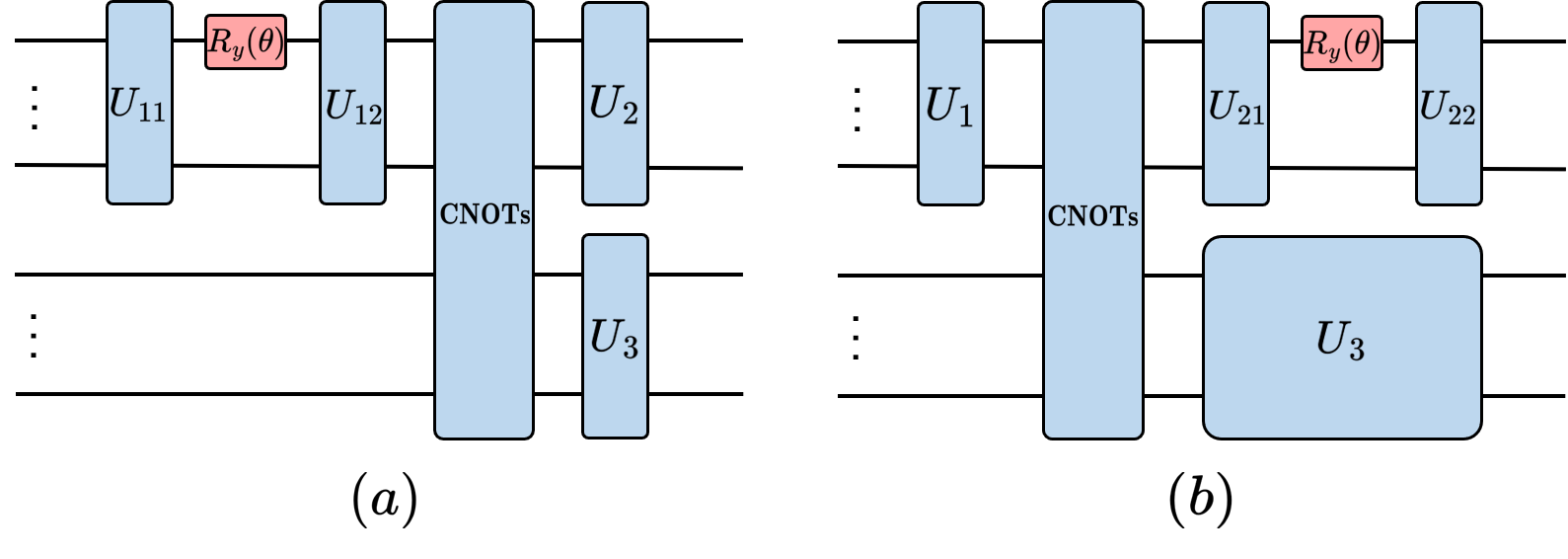}
\caption{\textbf{The structures of SEA used for studying the scaling of variance of general circuits} Panel (a) shows the $R_y$ gate located in Schmidt coefficient layer. Panel (b) shows the $R_y$ gate located in the LBC layer. Parameter $\theta \in [0,2\pi)$ is uniformly chosen and $U_1,U_{11},U_{12},U_2,U_{21},U_{22},U_3$ are Haar randomly chosen from unitary group in each round of sampling. $CNOTs$ is composed by a number of CNOTs.}
\label{fig:RU_QNN}
\end{figure}
\begin{figure}[t]
\centering
\includegraphics[width=0.5\textwidth]{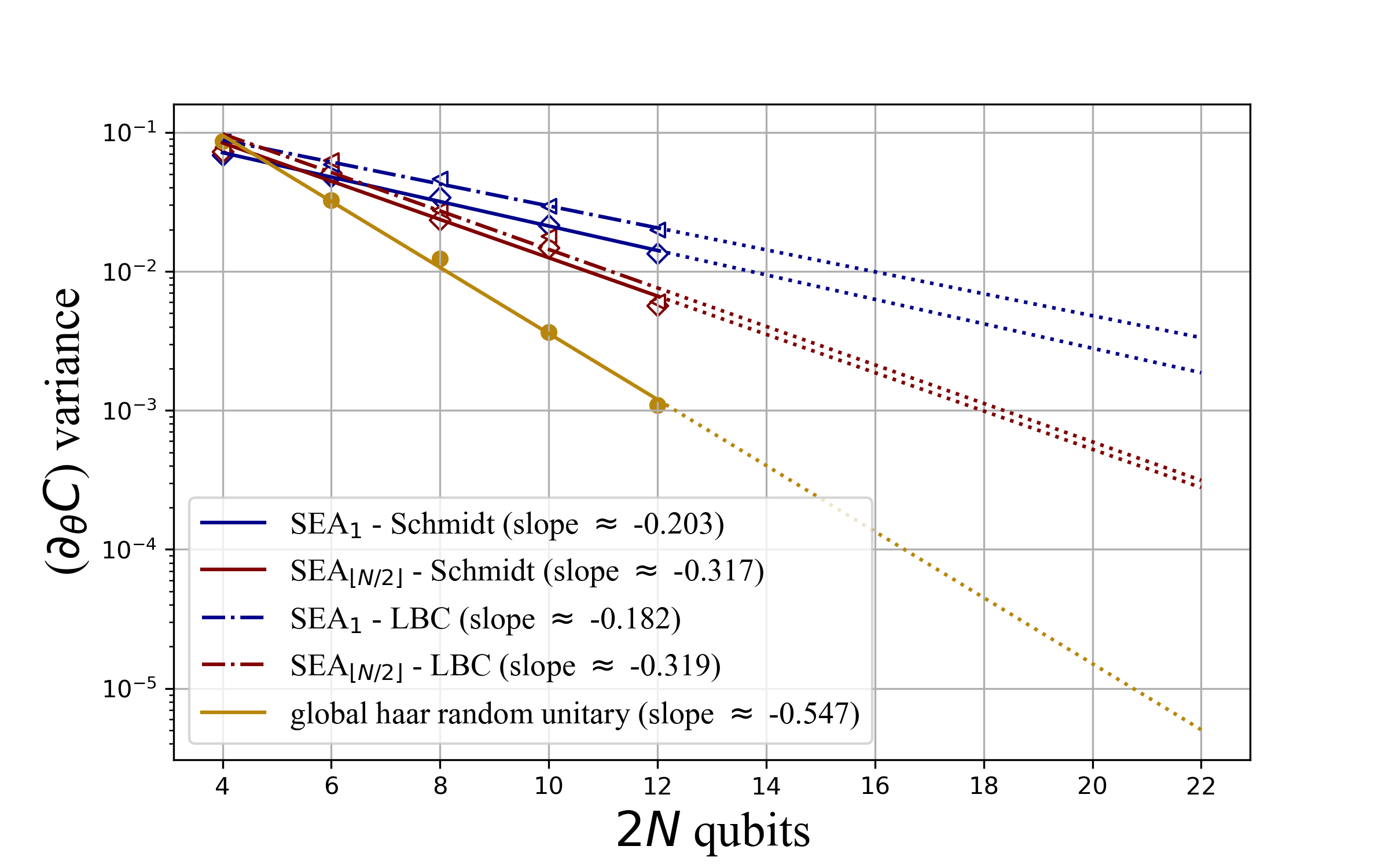}
\caption{\textbf{Comparison of the scaling of variance between general structure of SEA and general random circuit.} The SEA$_1$-Schmidt and the SEA$_{\lfloor N/2 \rfloor}$-Schmidt show the semi-log plot of the variance of gradient with respect to the parameter in Schmidt coefficient layer. The SEA$_1$-LBC and the SEA$_{\lfloor N/2 \rfloor}$-LBC show the semi-log plot of the variance of gradient with respect to the parameter in LBC layer. The markers on the fitted lines represent the results from our experiments. The dotted part of the lines represents the expected performance on a larger range.}
\label{fig:BP_specific}
\end{figure}

From Fig.~\ref{fig:BP_specific}, we see that no matter where this parameter is located in {\ansatz}, the variance of its gradient has a decay speed lower than the decay speed of variance of the parameter in a sufficiently random circuit that forms a 2-design. This fact indicates that the variance of {\ansatz} has significant improvement compared to using an ansatz that forms a 2-design, indicating an evident mitigation of barren plateaus. It is also worth noting that the fewer CNOTs in the entangling layer, the larger variance it will have, which is in line with Proposition~\ref{prop:gradient-variance}.
Specifically, in our experiments, if the scale of the problem is larger than 12 qubits, the variance of the gradients in a 2-design circuit will be less than $10^{-3}$. However, this bound of scale can be increased to 18 qubits using {\ansatz} with $\lfloor \frac{N}{2}\rfloor$ CNOTs according to our experiment results. With SEA of weaker expressibility, this bound can be further increased. 
In other words, we can notably increase the scale of problems with the same limited scaling of gradients by using {\ansatz} instead of a 2-design circuit. Therefore, {\ansatz} is more likely to scale to larger systems on VQE tasks because of its improvement in trainability by mitigating barren plateaus.

\textbf{\textit{Concluding remarks.}}---\label{sec: conclusion}\label{sec: conclusion}
The barren plateaus phenomenon that comes with the strong expressibility of an ansatz severely limits the trainability of the VQE. The main contribution of this work is proposing an ansatz structure {\ansatz} to mitigate the barren plateaus in optimizing parameterized circuits for quantum dynamics simulations. With the idea of Schmidt decomposition, we invent the SEA by removing the redundant expressibility in universal PQCs. We in particular have provided an explicit construction of {\ansatz} that has enough expressibility to generate arbitrary pure quantum states. Moreover, SEA does not form a unitary 2-design, and by tuning the number of CNOT gates, it can efficiently mitigate barren plateaus. From the perspective of frame potential, we prove that SEA has high trainability without losing too much expressibility, which allows a trade-off between trainability and expressibility. Numerical simulations show that {\ansatz} can be applied to approximate the ground state of the N$_2$ molecule and the Heisenberg model, whose numerical results confirm the effectiveness of {\ansatz}. We further compare the scaling of variance in partial derivatives between different ansatzes and establish that {\ansatz}
can obtain a quadratically lower 
rate of decay of variance, implying that {\ansatz} can considerably mitigate the barren plateaus of VQE.

We note that {\ansatz} is also effective in principle for other tasks with the sole purpose of generating a pure state, e.g., combinatorial optimization~\cite{Farhi2014} and entanglement detection~\cite{wang2022detecting}. It is natural to explore how to extend our structure by considering multipartite systems and other possible designs. We also remark that there are other proposals for dealing with barren plateaus~\cite{Grimsley2022,Tang2021,Skolik2021,Volkoff2020}, and for using meta-learning to improve efficiency~\cite{Verdon2019b,Wilson2021,Stollenwerk2022,Chen2021a}. The combination between SEA and other approaches such as adaptive methods may be worth a future study. Moreover, with recent works exploiting the symmetry of VQAs~\cite{Zheng2021,Meyer2022,Larocca2022}, it is also worth studying how to improve the performance of SEA from this perspective.

\textbf{\textit{Acknowledgements.}} 
X. L. and G. L. contributed equally to this work. 
We would like to thank Runyao Duan, Yin Mo, Chengkai Zhu, and Youle Wang for helpful discussions. 
Part of this work was done when X. L., G. L., J. H. and H. Z. were interns at Baidu Research.



\bibliographystyle{apsrev4-1}
\bibliography{smallbib}

\clearpage
\vspace{2cm}
\onecolumngrid
\vspace{2cm}

\begin{center}
{\textbf{\large Supplemental Material for \\ Mitigating barren plateaus of variational quantum eigensolvers}}
\end{center}

\renewcommand{\theequation}{S\arabic{equation}}
\renewcommand{\theproposition}{S\arabic{proposition}}
\renewcommand{\theHproposition}{S\arabic{proposition}}
\renewcommand{\thedefinition}{S\arabic{definition}}
\renewcommand{\thefigure}{S\arabic{figure}}
\renewcommand{\thetable}{S\arabic{table}}
\setcounter{equation}{0}
\setcounter{section}{0}
\setcounter{proposition}{0}
\setcounter{figure}{0}

\section{Preliminaries}
Here we present some supplemental lemmas and define some notations to be used throughout the proof.
We write a target quantum state, which is a pure state to be approximated by a PQC, as $|\phi\rangle$ and let the actual circuit output be $|\psi(\bm{\theta})\rangle$. 
We may simply write $|\psi\rangle$ if the circuit parameters $\bm{\theta}$ are unimportant in the context.
Moreover, let $\sfid=|\langle\psi|\phi\rangle|^2$ be fidelity.
Unless otherwise specified, $\{\ket{k}\}_{k=0}^{2^N-1}$ are the $N$-qubit computational bases.
We denote $\BCNOT\equiv\prod_{i=0}^{N-i}\CNOT(i, N+i)$ which is a composition of $N \CNOT$s controlled and targeted on the pair $(i, N+i)$ for all $i=0,\dots,N-1$. Let $\bin{i}{}$ be a vector for the binary expansion of $i$, and therefore $\bin{i}{j}$, where $0\leq j<\lceil\log i\rceil$, is the $j$-th leading digit of the expansion.




\subsection{$t$-design and integration over the unitary group}\label{Supp:unitary-design}
We firstly recall the definition of a $t$-design. Assume $\mathcal{U} = \{U_k\}_{k=1}^K$ is a finite of unitary operator on $\mathbb{C}^D$, and $P_{t,t}(U)$ is a polynomial of degree at most $t$ in the matrix elements of $U$ and at most $t$ in those of $U^{\dagger}$. Then, we define that $\mathcal{U}$ is a $t$-design if for every polynomial $P_{t,t}(U)$ we have
\begin{align}
\frac{1}{K}\sum_{k=1}^K P_{t,t}(U_k) = \int P_{t,t}(U)d\eta(U), \label{def:t-design}
\end{align}
where the integrals with respect to the Haar measure over the unitary group. Particularly, when $t=2$, the definition is equivalent to the following definition \cite{dankert2009exact}.

\renewcommand{\theproposition}{S\arabic{definition}}
\renewcommand{\theproposition}{S\arabic{proposition}}

\begin{definition}
$\{U_k\}_{k=1}^K$ forms a unitary $2$-design if and only if for any linear operators $C,D,\rho \in L(\mathbb{C}^D)$, we have
\begin{align} 
	\frac{1}{K}\sum_{k=1}^K U_k^{\dagger}CU_ k\rho U_k^{\dagger}DU_k 
	&=\quad\int_{\mathcal{U}(d)}U^{\dagger}CU\rho U^{\dagger}DUd\eta(U). \label{Appendix: def-2-design}
\end{align}
\end{definition}

Based on this definition, we can know whether a unitary group is a 2-design by comparing the both sides of Eq.~\eqref{Appendix: def-2-design}. Fortunately, according to the Schur's lemma \cite{feit1982representation} , the RHS of Eq.~\eqref{eq:Appendix-lem:2-design-U} has a closed form which is shown in lemma \ref{Appendix-lem:2-design-U}. The proof of lemma \ref{Appendix-lem:2-design-U} can be seen in \cite{emerson2005scalable}.

\begin{lemma}
\label{Appendix-lem:2-design-U}
Let $\{U_k\}_{k=1}^K$ forms a unitary $t$-design with $t\geq 2$, for any linear operators $C,D,\rho \in L(\mathbb{C}^D)$. We have
\begin{align} 
	\quad\int_{\mathcal{U}(d)}U^{\dagger}CU\rho U^{\dagger}DUd\eta(U)
	= \frac{\tr[CD]\tr[\rho]}{d}\frac{I}{d}+(\frac{d\tr[C]\tr[D]-\tr[CD]}{d(d^2-1)})(\rho-\tr[\rho]\frac{I}{d}).\label{eq:Appendix-lem:2-design-U}
\end{align}
\end{lemma}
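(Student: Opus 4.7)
The plan is to prove the lemma in two stages: first establish the four-term form of the output by a representation-theoretic symmetry argument, then compute the coefficients $t_0,t_1,t_2,t_3$ explicitly. Define the linear map on bipartite operators
\[
\mathcal{N}_{C,D}(\rho) := \int_{\mathcal{U}(d)\otimes\mathcal{U}(d)} U^{\dagger} C U \rho U^{\dagger} D U\, d\eta(U).
\]
Using the left-invariance of the product Haar measure, for any $V = V_A \otimes V_B$ the substitution $U \mapsto UV$ shows that $\mathcal{N}_{C,D}(V\rho V^{\dagger}) = V \mathcal{N}_{C,D}(\rho) V^{\dagger}$. Thus $\mathcal{N}_{C,D}$ commutes with the adjoint action of the local unitary group $\mathcal{U}(d)\times\mathcal{U}(d)$ on $\mathcal{B}(H_A\otimes H_B)$.

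Under this action, the operator space decomposes as $(\text{triv}\oplus\text{adj})_A \otimes (\text{triv}\oplus\text{adj})_B$, yielding four pairwise inequivalent irreducible components. By Schur's lemma, the commutant is four-dimensional, spanned by the isotypic projections onto these components. A convenient basis of the image space of any map in the commutant, after rearrangement, is $\{\rho,\; \rho_A\otimes I_B,\; I_A\otimes \rho_B,\; I_{AB}\tr(\rho)\}$, so we immediately obtain
\[
\mathcal{N}_{C,D}(\rho) = t_0\rho + t_1\tfrac{\rho_A\otimes I_B}{d} + t_2\tfrac{I_A\otimes\rho_B}{d} + t_3 I_{AB}\tr(\rho),
\]
with scalars $t_i$ that depend only on $C,D,d$ (not on $\rho$).

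To extract the coefficients, I would exploit the factorization $U = U_A\otimes U_B$, which makes the integral a product of two independent single-system Haar integrals. Expanding $C = \sum_{\beta} C_{\beta}^{A}\otimes C_{\beta}^{B}$, $D = \sum_{\gamma} D_{\gamma}^{A}\otimes D_{\gamma}^{B}$, and $\rho = \sum_{\alpha}\rho_{\alpha}^{A}\otimes\rho_{\alpha}^{B}$ in any operator-tensor decomposition, each factor of the integrand becomes a single-system $2$-design integral, to which Lemma~\ref{Appendix-lem:2-design-U} applies, producing an expression of the form $p \rho' + q\tr(\rho')I$ on each side. Multiplying the two sides, summing over $\alpha$ and collecting terms, the basis-dependent sums collapse into invariant quantities $\tr(C)$, $\tr(D)$, $\tr(CD)$, $\tr\bigl((\tr_A C)(\tr_A D)\bigr)$, and $\tr\bigl((\tr_B C)(\tr_B D)\bigr)$, which determine the four $t_i$.

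The main obstacle is the bookkeeping in the coefficient extraction: although the symmetry argument fixes the form elegantly, turning it into closed-form $t_i$ requires careful tracking of cross-terms in the double tensor expansion and verifying that the apparent basis-dependence cancels. A cleaner alternative, which I would keep in reserve, is to evaluate $\mathcal{N}_{C,D}$ on four linearly independent test operators spanning the four isotypic classes (for instance $I_{AB}$, $\tfrac{I_A}{d}\otimes\ketbra{0}{0}_B$, its $A{\leftrightarrow}B$ swap, and $\ketbra{00}{00}$) and solve the resulting $4\times4$ linear system for $t_0,\dots,t_3$; each such evaluation is a straightforward consequence of applying Lemma~\ref{Appendix-lem:2-design-U} to factorized test inputs.
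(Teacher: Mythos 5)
Your proposal does not prove the stated lemma; it proves (or rather, sketches a proof of) a different one. Lemma~\ref{Appendix-lem:2-design-U} is the single-system twirl formula: the integral is over the \emph{full} unitary group $\mathcal{U}(d)$, the commutant of the adjoint action is two-dimensional (trivial $\oplus$ adjoint), and the output is a combination of $\rho$ and $\tr[\rho]\,I/d$ only. What you analyze is the map $\rho\mapsto\int_{\mathcal{U}(d)\otimes\mathcal{U}(d)}U^{\dagger}CU\rho U^{\dagger}DU\,d\eta(U)$ over the \emph{local} unitary group on a bipartite space, with its four-dimensional commutant and the four-term ansatz $t_0\rho+t_1\rho_A\otimes I_B/d+t_2 I_A\otimes\rho_B/d+t_3 I_{AB}\tr(\rho)$ --- that is precisely Lemma~\ref{lem:2-design-UxU} of the paper, not the statement you were asked to prove. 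The mismatch is not cosmetic: your coefficient-extraction step explicitly says that each tensor factor ``becomes a single-system $2$-design integral, to which Lemma~\ref{Appendix-lem:2-design-U} applies,'' i.e.\ you invoke the target statement as a black box. As a proof of Lemma~\ref{Appendix-lem:2-design-U} the argument is therefore circular, and as written it establishes nothing about Eq.~\eqref{eq:Appendix-lem:2-design-U}.

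For calibration: the paper itself does not prove Lemma~\ref{Appendix-lem:2-design-U}; it cites Emerson \emph{et al.} for the closed form and reserves the Schur-commutant computation for the bipartite Lemma~\ref{lem:2-design-UxU}, where it determines the $t_i$ by evaluating four trace functionals $T_1,\dots,T_4$ and inverting a $4\times4$ linear system --- essentially the ``reserve'' strategy you describe at the end. If you want to actually prove the single-system statement, the correct skeleton is the degenerate version of your own symmetry argument: invariance of the Haar measure shows the map commutes with conjugation by every $V\in\mathcal{U}(d)$, Schur's lemma forces the form $a\rho+b\tr[\rho]I$, and the two constants follow from two evaluations (e.g.\ setting $\rho=I$, which gives $a+bd=\tr[CD]/d$, together with the functional $\sum_{i,j}\bra{i}f(\ketbra{i}{j})\ket{j}=\tr[C]\tr[D]$, which gives $ad^2+bd$ on the other side); solving yields exactly the coefficients in Eq.~\eqref{eq:Appendix-lem:2-design-U}. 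That two-equation computation is what is missing from your proposal.
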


Furthermore, we present the following lemma so that we can solve the problem with bipartite system.

\begin{lemma}\label{lem:2-design-UxU}
For any bipartite state $\rho_{AB}$ ($d_A=d_B=d$), and arbitrary linear operators $C,D \in L(\mathbb{C}^{d^2})$, we have
\begin{align}
	\int_{\mathcal{U}(d)\otimes \mathcal{U}(d)} d \eta(U) U^{\dagger} C U \rho U^{\dagger} D U = t_0 \rho + t_1 \frac{\rho_A\ox I_B}{d} + t_2\frac{I_A\ox\rho_B}{d} + t_3 I_{AB}\tr(\rho_{AB}),\label{eq: u1-tensor-u2}
\end{align}
where $\rho_A = \tr_B[ \rho_{AB}]$, $\rho_B = \tr_A [\rho_{AB}]$,and $\{t_j\}_{j=0}^3$ can be computed from the following linear system of equations
\begin{align}
	\tr(CD) & = t_0d^2 + t_1d^2 +t_2d^2 + t_3d^4, \label{eq:U1U2-part1}   \\
	\tr(C_A D_A) & = t_0d^3+t_1d+t_2d^3+t_3d^3,\label{eq:U1U2-part2} \\
	\tr(C_B D_B) & = t_0d^3+t_1d^3+t_2d+t_3d^3,\label{eq:U1U2-part3} \\
	\tr(C)\tr(D) & = t_0d^4 + t_1d^2 + t_2d^2+t_3d^2,\label{eq:U1U2-part4}
\end{align}
that is,
\begin{equation}
	\left[ {\begin{array}{*{20}{c}}
			{t_0}\\
			{t_1}\\
			t_2\\
			t_3
	\end{array}} \right] =\frac{1}{\left(d^{2}-1\right)^{2}}\left[ {\begin{array}{*{20}{c}}
			\frac{\tr[CD]}{d^2}-\frac{\tr[C_AD_A]}{d}-\frac{\tr[C_BD_B]}{d}+\tr[C]\tr[D]\\
			-\tr[CD]+\frac{\tr[C_AD_A]}{d}+d\tr[C_BD_B]-\tr[C]\tr[D] \\
			-\tr[CD]+d\tr[C_AD_A]+\frac{\tr[C_BD_B]}{d}-\tr[C]\tr[D]  \\
			\tr[CD]-\frac{\tr[C_AD_A]}{d}-\frac{\tr[C_BD_B]}{d}+\frac{\tr[C]\tr[D]}{d^2}
	\end{array}} \right].
\end{equation}
\end{lemma}

\begin{proof}
Similar to the proof of Lemma~\ref{Appendix-lem:2-design-U}, Schur's lemma implies that the LHS of Eq.~\eqref{eq: u1-tensor-u2} has an explicitly expression, which is denoted as follows:

\begin{align}
	\int_{\mathcal{U}(d)\otimes \mathcal{U}(d)} d \eta(U) U^{\dagger} C U \rho U^{\dagger} D U = t_0 \rho + t_1 \frac{\rho_A\ox I_B}{d} + t_2\frac{I_A\ox\rho_B}{d} + t_3 I_{AB}\tr(\rho_{AB}).
\end{align}
For simplicity, we define the following two functions:
\begin{align}
	&f^{(1)}(\rho)\equiv t_0 \rho + t_1 \rho_A\ox I_B/d + t_2I_A\ox\rho_B/d + t_3 I_{AB}\tr(\rho_{AB}),\\
	&f^{(2)}(\rho)\equiv  \int_{\mathcal{U}(d)\otimes \mathcal{U}(d)} d \eta(U) U^{\dagger} C U \rho U^{\dagger} D U ,
\end{align}
where $f^{(1)} = f^{(2)}$.

Then we only need to consider the coefficients $t_0,t_1,t_2,t_3$. To calculate the coefficients, we defined the following four operators:
\begin{align}
	&T_1(f)\equiv\sum_{i,j}\bra{ij}f(I)\ket{ij},\\
	&T_2(f)\equiv\sum_{i,j,k}\bra{ik}f(\ketbra{i}{j}_A\otimes I_B)\ket{jk},\\
	&T_3(f)\equiv\sum_{i,k,l}\bra{ik}f(I_A \otimes \ketbra{k}{l}_B)\ket{il},\\
	&T_4(f)\equiv\sum_{i,j,k,l}\bra{ij}f(\ketbra{ij}{kl})\ket{kl}.
\end{align}

For Eq.~\eqref{eq:U1U2-part1}, we have
\begin{align}
	T_1(f^{(1)})&=t_0\sum_{i,j}\bra{ij}I\ket{ij}+t_1\sum_{i,j}\bra{ij}I_A\otimes I_B\ket{ij}+t_2\sum_{i,j}\bra{ij}I_A\otimes I_B\ket{ij}+t_3d^4\\
	&=t_0d^2 + t_1d^2 +t_2d^2 + t_3d^4,\\
	T_1(f^{(2)})&=\sum_{i,j}\bra{ij}\int_{\mathcal{U}(d)\otimes \mathcal{U}(d)} d \eta(U) U^{\dagger} C U I U^{\dagger} D U\ket{ij}\\
	&=\tr(CD).
\end{align}
Since $T_1(f^{(1)})=T_1(f^{(2)})$, then $\tr(CD) = t_0d^2 + t_1d^2 +t_2d^2 + t_3d^4$.

For Eq.~\eqref{eq:U1U2-part3}, we have
\begin{align}
	T_2(f^{(1)})&=t_0\sum_{i,j,k}\bra{ik}\ketbra{i}{j}\otimes I_B\ket{jk}+t_1\sum_{i,j,k}\bra{ik}\ketbra{i}{j}\otimes I_B\ket{jk}\nonumber\\
	&\quad +\frac{t_2}{d}\sum_{i,j,k}\bra{ik}I_A\otimes\tr[\ketbra{i}{j}]I_B\ket{jk}+t_3\sum_{i,j,k}\bra{ik}I_{AB}\tr[\ketbra{i}{j}\otimes I_{B}]\ket{jk}\\
	&=t_0d^3 + t_1d^3 +t_2d + t_3d^3,\\
	T_2(f^{(2)})&=\sum_{i,j,k}\bra{ik}\int_{\mathcal{U}(d)\otimes \mathcal{U}(d)} d \eta(U) U^{\dagger} C U (\ketbra{i}{j}_A\otimes I_B) U^{\dagger} D U \ket{jk}\\
	&=\sum_{i,j,k}\int_{\mathcal{U}(d)\otimes \mathcal{U}(d)}d \eta(U) \bra{ik}  U^{\dagger} C U (\ketbra{i}{j}_A\otimes I_B) U^{\dagger} D U \ket{jk} \\
	&=\tr(C_B D_B).
\end{align}
Since $T_2(f^{(1)})=T_2(f^{(2)})$, then $\tr(C_B D_B) = t_0d^3+t_1d^3+t_2d+t_3d^3$.

Similarly, we have
\begin{align}
	T_3(f^{(1)})&=t_0d^3+t_1d+t_2d^3+t_3d^3,\\
	T_3(f^{(2)})&=\tr(C_A D_A),\\
	T_4(f^{(1)})&=t_0d^4+t_1d^2+t_2d^2+t_3d^2,\\
	T_4(f^{(2)})&=\tr(C)\tr(D).\\
\end{align}
Since
\begin{align}
	T_3(f^{(1)})&=T_3(f^{(2)}),\\
	T_4(f^{(1)})&=T_4(f^{(2)}),
\end{align}
then
\begin{align}
	\tr(C_A D_A) & = t_0d^3+t_1d+t_2d^3+t_3d^3,\\
	\tr(C)\tr(D) & = t_0d^4 + t_1d^2 + t_2d^2+t_3d^2.
\end{align}
Up to now, we have proved Eq.~\eqref{eq:U1U2-part1} to Eq.~\eqref{eq:U1U2-part4}.

Let
\begin{equation}
	R = \left(\begin{array}{llll}
		d^{2} & d^{2} & d^{2} & d^{4} \\
		d^{3} & d^{1} & d^{3} & d^{3} \\
		d^{3} & d^{3} & d^{1} & d^{3} \\
		d^{4} & d^{2} & d^{2} & d^{2}
	\end{array}\right).
\end{equation}
Then
\begin{equation}
	R^{-1} = \frac{1}{\left(d^{2}-1\right)^{2}}\left(\begin{array}{cccc}
		\frac{1}{d^{2}} & -\frac{1}{d} & -\frac{1}{d} & 1 \\
		-1 & \frac{1}{d} & d & -1 \\
		-1 & d & \frac{1}{d} & -1 \\
		1 & -\frac{1}{d} & -\frac{1}{d} & \frac{1}{d^{2}}
	\end{array}\right).
\end{equation}
Hence, we have 
\begin{equation}
	\left[ {\begin{array}{*{20}{c}}
			{t_0}\\
			{t_1}\\
			t_2\\
			t_3
	\end{array}} \right] = R^{-1}\left[ {\begin{array}{*{20}{c}}
			\tr(CD)\\
			\tr(C_A D_A) \\
			\tr(C_B D_B) \\
			\tr(C)\tr(D)
	\end{array}} \right]=\frac{1}{\left(d^{2}-1\right)^{2}}\left[ {\begin{array}{*{20}{c}}
			\frac{\tr[CD])}{d^2}-\frac{\tr[C_AD_A]}{d}-\frac{\tr[C_BD_B]}{d}+\tr[C]\tr[D]\\
			-\tr[CD]+\frac{\tr[C_AD_A]}{d}+d\tr[C_BD_B]-\tr[C]\tr[D] \\
			-\tr[CD]+d\tr[C_AD_A]+\frac{\tr[C_BD_B]}{d}-\tr[C]\tr[D]  \\
			\tr[CD]-\frac{\tr[C_AD_A]}{d}-\frac{\tr[C_BD_B]}{d}+\frac{\tr[C]\tr[D]}{d^2}
	\end{array}} \right],
\end{equation}

\begin{equation}
	\left[ {\begin{array}{*{20}{c}}
			{t_0}\\
			{t_1}\\
			t_2\\
			t_3
	\end{array}} \right] =\frac{1}{\left(d^{2}-1\right)^{2}}\left[ {\begin{array}{*{20}{c}}
			\frac{\tr[CD]}{d^2}-\frac{\tr[C_AD_A]}{d}-\frac{\tr[C_BD_B]}{d}+\tr[C]\tr[D]\\
			-\tr[CD]+\frac{\tr[C_AD_A]}{d}+d\tr[C_BD_B]-\tr[C]\tr[D] \\
			-\tr[CD]+d\tr[C_AD_A]+\frac{\tr[C_BD_B]}{d}-\tr[C]\tr[D]  \\
			\tr[CD]-\frac{\tr[C_AD_A]}{d}-\frac{\tr[C_BD_B]}{d}+\frac{\tr[C]\tr[D]}{d^2}
	\end{array}} \right].
\end{equation}

\end{proof}

\section{The effectiveness of SEA on VQE}
\label{appendix:effectiveness-of-vqe}
In this section, we present a proof of the effectiveness of SEA on VQE. We first give the Proposition~\ref{pro:state} to illustrate that SEA can generate an arbitrary pure state. Then in Proposition~\ref{pro:VQE} we demonstrate that the SEA in Proposition~\ref{pro:state} can find the ground state energy of a Hamiltonian, which implies the effectiveness of SEA on VQE.

\renewcommand{\theproposition}{S\arabic{proposition}}

\begin{proposition}
\label{pro:state}
If $U_1$ can generate an arbitrary $N$-qubit pure state, then given any $2N$-qubit pure state $\ket{\phi}$, a $2N$-qubit SEA can generate $\ket{\phi}$ with a certain set of parameters $\bm{\hat{\theta}} =  \{\hat{\bm{\theta}}_1,\hat{\bm{\theta}}_2,\hat{\bm{\theta}}_3\}$, that is,
\begin{align}
	S(\bm{\hat{\theta}})\ket{0}^{\otimes 2N} =  \ket{\phi}.
\end{align}
\end{proposition}

\begin{proof}


Beginning with Schmidt decomposition, we can write the target state $\ket{\phi}$ as
\begin{align}
	\ket{\phi}=(\hat{U}_2\otimes \hat{U}_3)\sum_{k=0}^{2^N-1}\lambda_k\ket{k}_A\ket{k}_B, 
\end{align}
where $A$ and $B$ are two $N$-qubit subsystems, $\hat{U}_2$ and $\hat{U}_3$ are two unitary operators acting on these two subsystems, $\{\ket{k}\}_{k=0}^{2^N-1}$ are the $N$-qubit computational bases, and $\{\lambda_k\}_{k=0}^{2^N-1}$ are Schmidt coefficients.

Because $U_1(\bm{\theta_1})$ can generate an arbitrary $N$-qubit pure state, and $U_2(\bm{\theta_2}), U_3(\bm{\theta_3})$ are universal, we can choose a certain set of parameters $\bm{\hat{\theta}} =  \{\hat{\bm{\theta}}_1,\hat{\bm{\theta}}_2,\hat{\bm{\theta}}_3\}$ such that
\begin{align}
	U_1(\hat{\bm{\theta}}_1)\ket{0}^{\otimes N} &=\sum_{k=0}^{2^N-1} \lambda_k\ket{k},\\
	U_2(\hat{\bm{\theta}}_2) &= \hat{U}_2,\\
	U_3(\hat{\bm{\theta}}_3) &= \hat{U}_3.
\end{align}

Combining with $V$, which is a composition of $N \CNOT$s controlled and targeted on the qubit-pairs $\{(i, N+1)\}_{i=0}^{N-1}$, there has 
\begin{align}
	V (U_1(\bm{\theta_1})\otimes I)\ket{0}_A^{\otimes N}\ket{0}_B^{\otimes N}
	= & \sum_{k=0}^{2^N-1} \lambda_k V \ket{k}_A\ket{0}_B^{\otimes N} \\
	= & \sum_{k=0}^{2^N-1} \lambda_k\ket{k}_A\ket{k}_B.
\end{align}

Therefore, we have
\begin{align}
	S(\hat{\bm{\theta}})\ket{0}^{\otimes 2N}
	&=(U_2(\hat{\bm{\theta}}_2)\otimes U_3(\hat{\bm{\theta}}_3))V(U_1(\hat{\bm{\theta}}_1)\otimes I)\ket{0}^{\otimes N}\ket{0}^{\otimes N}\\
	&=(\hat{U}_2\otimes \hat{U}_3)\sum_{k=0}^{2^N-1} \lambda_k\ket{k}_A\ket{k}_B. \\
	&=\ket{\phi}.
\end{align}

\end{proof}

\begin{proposition}
\label{pro:VQE}
For any $2N$-qubit Hamiltonian $H$, it holds that 
\begin{align}
	\min_{S} \bra{0}^{\otimes 2N}S^{\dagger}HS\ket{0}^{\otimes 2N} = E_0,
\end{align}
where $E_0$ is the ground state energy of $H$
and the optimization is over all unitaries reachable by {\ansatz} with $U_1$ having universal wavefunction expressibility.
\end{proposition}

\begin{proof}

A given $2N$-qubit Hamiltonian $H$ can be written as follows according to spectral decomposition:
\begin{align}
	H =\sum_{i=0}^m E_iP_i, 
\end{align}
where $\{E_i\}_{i=0}^{m}$ are eigenvalues of $H$ such that $E_0 < E_1 < \cdots < E_{m}$ and $P_i$ is a projector onto the eigenspace $V_i$ corresponding to the eigenvalue $E_i$.
Then given an arbitrary pure state $\ket{\psi} \in V_0$, we have
\begin{align}
	H \ket{\psi} = E_0\ket{\psi}.  
\end{align}
Note that when the optimization is over all unitaries reachable by {\ansatz} with $U_1$ having universal wavefunction expressibility, there exists an $S$ such that $\ket \psi = S\ket{0}^{\otimes 2N}$. 
Therefore,
\begin{align}
	\bra{0}^{\otimes 2N}S^{\dagger}HS\ket{0}^{\otimes 2N} &= \bra{\psi}H\ket{\psi}\\
	&= E_0\braket{\psi}{\psi}\\
	&= E_0.
\end{align}
Since it is trivial that $ \min_{S} \bra{0}^{\otimes 2N}S^{\dagger}HS\ket{0}^{\otimes 2N} \geq E_0$, we have
\begin{align}
	\min_{S} \bra{0}^{\otimes 2N}S^{\dagger}HS\ket{0}^{\otimes 2N} = E_0.
\end{align}

\end{proof}

\section{Proof of Proposition 1}
\label{appendix:proof-of-TA-state }

In this section, we give a detailed proof of Proposition~\ref{pro:truncation}. We first give the following lemma that is used in the proof.

\renewcommand{\theproposition}{S\arabic{proposition}}
\begin{lemma}\label{lemma:average}
For any descending sequence $\{x_i\}_{i=1}^{n}(x_i\geq x_{i+1})$, and any $n\geq N\geq M\geq 1$ ($n, N, M \in \mathbb{Z}$), the following is true:
\begin{align}
	\frac{1}{M}\sum_{i=1}^Mx_i\geq \frac{1}{N}\sum_{i=1}^Nx_i.
\end{align}
\end{lemma}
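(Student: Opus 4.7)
The plan is to give a direct one-step comparison, avoiding induction. Write $A_k \equiv \frac{1}{k}\sum_{i=1}^{k} x_i$ for the running average, so the target inequality becomes $A_M \geq A_N$ for $M \leq N$.

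First I would establish the key elementary fact that $A_M \geq x_M$. This follows immediately from the descending condition: every term $x_1, \ldots, x_M$ in the defining sum of $A_M$ is at least $x_M$, so their average is at least $x_M$. Next, I would use the descending condition once more in the opposite direction: for every index $i$ with $M < i \leq N$, we have $x_i \leq x_M$, and therefore $x_i \leq A_M$ by the previous step. Summing this over $i = M+1, \ldots, N$ gives the tail bound
\begin{equation}
\sum_{i=M+1}^{N} x_i \;\leq\; (N-M)\, A_M.
\end{equation}

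The conclusion then follows by splitting the sum defining $N A_N$:
\begin{equation}
N\, A_N \;=\; \sum_{i=1}^{M} x_i + \sum_{i=M+1}^{N} x_i \;=\; M\, A_M + \sum_{i=M+1}^{N} x_i \;\leq\; M\, A_M + (N-M)\, A_M \;=\; N\, A_M,
\end{equation}
and dividing by $N$ yields $A_N \leq A_M$, which is the claim. The edge case $M = N$ is trivial (the tail sum is empty), and the hypothesis $n \geq N$ only ensures that all indices referred to are within the sequence, so it requires no separate treatment.

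There is essentially no obstacle here; the only subtlety is recognizing that the monotonicity of the running average for a descending sequence reduces to the two uses of the chain $x_i \leq x_M \leq A_M$ for $i > M$. I do not expect to need induction, rearrangement inequalities, or any deeper lemma. The result will be invoked later (presumably in bounding a fidelity contribution via Schmidt coefficients, given its placement near Property~\ref{pro: truncation-general}), so stating it in this clean averaged form is all that is needed.
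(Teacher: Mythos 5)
Your proof is correct and is essentially the same argument as the paper's: both split the sum at index $M$ and compare the head average to the tail entries through the pivot $x_M$ (the paper phrases this by factoring the difference $\frac{1}{M}\sum_{i=1}^M x_i - \frac{1}{N}\sum_{i=1}^N x_i$ into $\frac{N-M}{N}$ times a difference of head and tail averages, while you bound $N A_N$ directly by $N A_M$). Your formulation has the minor advantage of handling the $M=N$ case without the $\frac{1}{N-M}$ factor that the paper's algebra momentarily introduces, but the underlying idea is identical.
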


\begin{proof}
Because $N\geq M\geq 1$ and $x_i\geq x_{i+1}$, we have the following formula:
\begin{align}
	&\quad\frac{1}{M}\sum_{i=1}^Mx_i-\frac{1}{N}\sum_{i=1}^Nx_i\nonumber\\
	&=\frac{1}{M}\sum_{i=1}^Mx_i-(\frac{1}{N}\sum_{i=1}^{M}x_i+\frac{1}{N}\sum_{i=M+1}^Nx_i)\\
	&=(\frac{1}{M}-\frac{1}{N})\sum_{i=1}^Mx_i-\frac{1}{N}\sum_{i=M+1}^Nx_i\\
	&=\frac{1}{N}(\frac{N-M}{M}\sum_{i=1}^Mx_i-\sum_{i=M+1}^Nx_i)\\
	&=\frac{N-M}{N}(\frac{1}{M}\sum_{i=1}^Mx_i-\frac{1}{N-M}\sum_{i=M+1}^Nx_i)\\
	&\geq \frac{N-M}{N}(\frac{1}{M}\sum_{i=1}^Mx_M-\frac{1}{N-M}\sum_{i=M+1}^Nx_{M+1}) \quad \text{(because $x_i\geq x_{i+1}$)}\\
	&= \frac{N-M}{N}(x_M-x_{M+1})\\
	&\geq 0,
\end{align}
thus, we obtain $\frac{1}{M}\sum_{i=1}^Mx_i-\frac{1}{N}\sum_{i=1}^Nx_i\geq 0$, that is, $\frac{1}{M}\sum_{i=1}^Mx_i\geq \frac{1}{N}\sum_{i=1}^Nx_i$.
\end{proof}


\setcounter{proposition}{\arabic{proposition}-1}
\renewcommand{\theproposition}{1}
\renewcommand{\theHproposition}{old1}
\begin{proposition}\label{pro:truncation}
If $U_1$ can generate any $N$-qubit pure state that is a superposition of at most $K$ computational basis states, then for any $\ket{\phi}$, there exists an {\ansatz} output state $\ket{\psi}$ with $F(\ket{\phi},\ket{\psi})\geq \min\left\{\frac{K}{r}, 1\right\}$,
where $F(\ket{\phi},\ket{\psi})$ is the fidelity between $\ket{\phi}$ and $\ket{\psi}$, and $r$ is the Schmidt rank of $\ket\phi$.
\end{proposition}

\begin{proof}
For any $2N$-qubit target state $\ket{\phi}=\sum_{k=0}^{r-1}\lambda_k\ket{v_k}_A\ket{v_k}_B$, we explicitly construct an {\ansatz}, such that $\ket{\psi}=S(\hat{\bm{\theta}})\ket{0}^{\otimes 2N}$ and $F(\ket{\phi},\ket{\psi})\geq\min\{\frac{K}{r}, 1\}$, where $\hat{\bm{\theta}}=\{\hat{\bm{\theta}}_1,\hat{\bm{\theta}}_2,\hat{\bm{\theta}}_3\}$ is a certain set of parameters.

Suppose $\{\lambda_k\}_{k=0}^{r-1}$ is the Schmidt coefficients of $\ket\phi$ sorted in descending order. Similar to the proof of Proposition~\ref{pro:state}, we can choose a certain set of parameters $\hat{\bm{\theta}}$ such that
\begin{align}
	U_1(\hat{\bm{\theta}}_1)\ket{0}_A^{\otimes N} &=\frac{1}{\sqrt{M}}\sum_{k=0}^{\min\{K,r\}-1} \lambda_k\ket{k},\\
	U_2(\hat{\bm{\theta}}_2)\ket{k}_A &= \ket{v_k}_A,\\
	U_3(\hat{\bm{\theta}}_3)\ket{k}_B &=\ket{v_k}_B,
\end{align}
where $M=\sum_{k=0}^{\min\{K,r\}-1} \lambda_k^2$. Then
\begin{align}
	\ket{\psi}=S(\hat{\bm{\theta}})\ket{0}^{\otimes 2N}=\frac{1}{\sqrt{M}}\sum_{k=0}^{\min\{K,r\}-1}\lambda_k\ket{v_k}_A\ket{v_k}_B.
\end{align}
Hence
\begin{align}
	\sfid
	&=|\langle\phi|\psi\rangle|^2\\
	&=\Big(\frac{1}{\sqrt{M}}\sum_{k=0}^{\min\{K,r\}-1}\lambda_k^2\Big)^2 \\
	&=\frac{1}{M}\Big(\sum_{k=0}^{\min\{K,r\}-1}\lambda_k^2\Big)^2 \\
	&=\sum_{k=0}^{\min\{K,r\}-1}\lambda_k^2 \\
	&\geq \frac{\min\{K,r\}}{r}\sum_{k=0}^{r-1}\lambda_k^2 \quad(\text{due to lemma \ref{lemma:average}})\\
	&=\frac{\min\{K,r\}}{r}\\
	&=\min\{\frac{K}{r},1\},
\end{align}
with equality holds if and only if $\lambda_i=\lambda_j, \forall i,j=0,\dots,r-1$.
\end{proof}

\section{Proof of Proposition 2}
\label{appendix:proof-TA-NOT-2-design}
Here we prove the Proposition~\ref{pro:2-design} about SEA does not form a $2$-design. We first present Lemma~\ref{lem:cnot}, which is used in the proof.

\renewcommand{\theproposition}{S\arabic{proposition}}
\renewcommand{\theHproposition}{\theproposition}
\begin{lemma}\label{lem:cnot}
$\BCNOT=\sum_{i=0}^{2^n-1}\ketbra{i}{i}_A\otimes V_{B_i}$, where $A,B$ are subsystems and $V_{B_i}=\bigotimes_{j=0}^{n-1}X_j^{\bin{i}{j}}, \bin{i}{j}\in\{0, 1\}$, that is the operator on the subsystem $B$ that represents the binary bit of $i$ is 1 acting on pauli $X$, and $0$ acting on $I$. 
Then $V_{B_i}$ has the following properties:
\begin{enumerate}
	\item $[V_{B_i},V_{B_j}]=0$;
	\item $V_{B_i}=V_{B_i}^{\dagger}$;
	\item $\bra{0}^{\otimes n}V_{B_i}V_{B_j}\ket{0}^{\otimes n}=\delta_{ij}$,
\end{enumerate}
\end{lemma}
where $\delta_{ij}$ is Kronecker delta. 

Lemma~\ref{lem:cnot} can be easily proved using properties of Pauli operators.
\renewcommand\theproposition{2}
\renewcommand{\theHproposition}{old2}
\setcounter{proposition}{\arabic{proposition}-1}

\begin{proposition}\label{pro:2-design}
{\ansatz} with $U_i(i=1,2,3)$ being local 2-design and $\BCNOT$ as entangling layer does not form a 2-design on the global system.
\end{proposition}

\begin{proof}
Without losing generality, we only consider the SEA with $2n$ qubits in this proof. Lemma~\ref{Appendix-lem:2-design-U} says that if an ansatz forms a 2-design, the Eq.~\eqref{eq:Appendix-lem:2-design-U} should hold for any linear operators $C,D,\rho \in L(\mathbb{C}^{d^2})$. 
Assuming $d=2^{n}, C_0=D_0=\rho=\ketbra{0}{0}^{\otimes 2n}$, then if {\ansatz} forms a 2-design, we have
\begin{align}
	&\int_{U(\theta)}\bra{0}^{\otimes 2n}U^{\dagger}C_0U\rho U^{\dagger}D_0U\ket{0}^{\otimes 2n}d\eta(U) \label{eq: TA_not_2-design eq 1}\nonumber\\
	=&\int_{\mathcal{U}(d^2)}\bra{0}^{\otimes 2n}U^{\dagger}C_0U\rho U^{\dagger}D_0U\ket{0}^{\otimes 2n}d\eta(U)\\
	=&\frac{2}{d^2(d^2+1)},
	\label{eq:2n-qubit}
\end{align}
where $U(\theta)$ denotes the subset of unitary group generated by \ansatz,  and $d\eta(U)$ denotes the Haar measure.

In the following steps, we will explicitly calculate the Eq.~\eqref{eq: TA_not_2-design eq 1} and prove that it cannot be $\frac{2}{d^2(d^2+1)}$.

For the {\ansatz} with $U_i(i=1,2,3)$ being local 2-design and $\BCNOT=V=\sum_{i=0}^{2^n-1}\ketbra{i}{i}_A\otimes V_{B_i}$, the unitary of this ansatz is 
\begin{align}
	U&= \sum_{i=0}^{2^n-1}(U_2\otimes U_3)\cdot \ketbra{i}{i}\otimes V_{B_i} \cdot (U_1\otimes I_B)\label{eq:2n-U}.
\end{align}
Therefore,
\begin{align}
	U\ket{0}^{\otimes 2n}&= \sum_{i=0}^{2^n-1}(U_2\otimes U_3)\cdot \ketbra{i}{i}\otimes V_{B_i} 
	\cdot (U_1\otimes I_B)\ket{0}^{\otimes 2n},\\
	\bra{0}^{\otimes 2n}U^{\dagger} &=\sum_{i=0}^{2^n-1}\bra{0}^{\otimes 2n}(U_1^{\dagger}\otimes I_B) \cdot\ketbra{i}{i}\otimes V_{B_i}
	\cdot (U_2^{\dagger}\otimes U_3^{\dagger}).
\end{align}

Then
\begin{align}
	&\quad\bra{0}^{\otimes 2n}U^{\dagger}C_0U\rho U^{\dagger}D_0U\ket{0}^{\otimes 2n}\nonumber\\
	&=\bra{0}^{\otimes 2n}U^{\dagger}(\ketbra{0}{0})^{\otimes 2n}U(\ketbra{0}{0})^{\otimes 2n}U^{\dagger}
	(\ketbra{0}{0})^{\otimes 2n}U\ket{0}^{\otimes 2n}\\
	&=\bra{0}^{\otimes 2n}(U_1^{\dagger}\otimes I_B) \cdot V^{\dagger}\cdot (U_2^{\dagger}\otimes U_3^{\dagger})(\ketbra{0}{0})^{\otimes 2n}
	(U_2\otimes U_3)\cdot V\cdot (U_1\otimes I_B)(\ketbra{0}{0})^{\otimes 2n}(U_1^{\dagger}\otimes I_B) 
	\cdot V^{\dagger}\nonumber\\
	&\quad \cdot (U_2^{\dagger}\otimes U_3^{\dagger}))(\ketbra{0}{0})^{\otimes 2n}(U_2\otimes U_3)\cdot V
	\cdot (U_1\otimes I_B)\ket{0}^{\otimes 2n}\\
	&:=\bra{0}^{\otimes 2n}(U_1^{\dagger}\otimes I_B) \cdot V^{\dagger}\cdot (U_2^{\dagger}\otimes U_3^{\dagger})\cdot C_0
	\cdot(U_2\otimes U_3)\cdot\rho_1\cdot(U_2^{\dagger}\otimes U_3^{\dagger})\cdot D_0\cdot(U_2\otimes U_3)
	\cdot V\nonumber\\
	&\quad\cdot (U_1\otimes I_B)\ket{0}^{\otimes 2n},
\end{align}
where 
\begin{align}
	V&=\sum_{i=0}^{2^n-1}\ketbra{i}{i}_A\otimes V_{B_i},\\
	\rho_1&=V\cdot (U_1\otimes I_B)(\ketbra{0}{0})^{\otimes 2n}(U_1^{\dagger}\otimes I_B) \cdot V^{\dagger}.
\end{align}
Therefore, we have
\begin{align}
	\rho_{1}^A & = \tr_B[\rho_1]\\
	&=\sum_{k=0}^{d-1}I_A\otimes\bra{k}_B\cdot V\cdot (U_1\otimes I_B)(\ketbra{0}{0})^{\otimes 2n}\cdot (U_1^{\dagger}\otimes I_B) \cdot V^{\dagger}\cdot I_A\otimes\ket{k}_B\\
	&=\sum_{k=0}^{d-1}I_A\otimes\bra{k}_B\cdot (\sum_{i=0}^{d-1}\ketbra{i}{i}_A\otimes V_{B_i})\cdot (U_1\otimes I_B)(\ketbra{0}{0})^{\otimes 2n}\cdot (U_1^{\dagger}\otimes I_B) \cdot (\sum_{j=0}^{d-1}\ketbra{j}{j}_A\otimes V_{B_j}^{\dagger})\cdot I_A\otimes\ket{k}_B\\
	&=\sum_{i,j,k=0}^{d-1}I_A\otimes\bra{k}_B\cdot (\ketbra{i}{i}_A\otimes V_{B_i})\cdot (U_1\otimes I_B)(\ketbra{0}{0})^{\otimes 2n}\cdot (U_1^{\dagger}\otimes I_B) \cdot (\ketbra{j}{j}_A\otimes V_{B_j}^{\dagger})\cdot I_A\otimes\ket{k}_B\\
	&=\sum_{i,j,k=0}^{d-1}\ketbra{i}{i}_AU_1(\ketbra{0}{0})^{\otimes n}U_1^{\dagger}\ketbra{j}{j}_A\otimes\bra{k}_BV_{B_i}(\ketbra{0}{0})^{\otimes n}V_{B_j}^{\dagger}\ket{k}_B\\
	&=\sum_{i,j=0}^{d-1}\ketbra{i}{i}_AU_1(\ketbra{0}{0})^{\otimes n}U_1^{\dagger}\ketbra{j}{j}_A(\sum_{k=0}^{d-1}\bra{k}_BV_{B_i}(\ketbra{0}{0})^{\otimes n}V_{B_j}^{\dagger}\ket{k}_B)\\
	&=\sum_{i,j=0}^{d-1}\ketbra{i}{i}_AU_1(\ketbra{0}{0})^{\otimes n}U_1^{\dagger}\ketbra{j}{j}_A\tr(V_{B_i}(\ketbra{0}{0})^{\otimes n}V_{B_j}^{\dagger})\\
	&=\sum_{i,j=0}^{d-1}\ketbra{i}{i}_AU_1(\ketbra{0}{0})^{\otimes n}U_1^{\dagger}\ketbra{j}{j}_A\tr(\bra{0}^{\otimes n}V_{B_i}V_{B_j}^{\dagger}\ket{0}^{\otimes n})\\
	&=\sum_{i=0}^{d-1}\ketbra{i}{i}U_1\ketbra{0}{0}^{\otimes n}U_1^{\dagger}\ketbra{i}{i},\quad \text{(due to lemma~\ref{lem:cnot})}\\
	\rho_{1}^B& =  \tr_A[\rho_1]\\
	&=\sum_{k=0}^{d-1}\bra{k}_A\otimes I_B\cdot V\cdot (U_1\otimes I_B)(\ketbra{0}{0})^{\otimes 2n}\cdot (U_1^{\dagger}\otimes I_B) \cdot V^{\dagger}\cdot \ket{k}_A\otimes I_B\\
	&=\sum_{k=0}^{d-1}\bra{k}_A\otimes I_B\cdot  (\sum_{i=0}^{d-1}\ketbra{i}{i}_A\otimes V_{B_i})\cdot (U_1\otimes I_B)(\ketbra{0}{0})^{\otimes 2n}\cdot (U_1^{\dagger}\otimes I_B) \cdot (\sum_{j=0}^{d-1}\ketbra{j}{j}_A\otimes V_{B_j}^{\dagger})\cdot \ket{k}_A\otimes I_B\\  
	&=\sum_{i,j,k=0}^{d-1}\bra{k}_A\ketbra{i}{i}_AU_1(\ketbra{0}{0})^{\otimes n}U_1^{\dagger}\ketbra{j}{j}_A\ket{k}_A\cdot V_{B_i}(\ketbra{0}{0})^{\otimes n}V_{B_j}\\
	&=\sum_{k=0}^{d-1}\bra{k}U_1\ketbra{0}{0}^{\otimes n}U_1^{\dagger}\ket{k}\cdot V_{B_k}\ketbra{0}{0}^{\otimes n}V_{B_k}.    
\end{align}

Then combining lemma~\ref{Appendix-lem:2-design-U} and lemma~\ref{lem:2-design-UxU}, we have
\begin{align}
	& \quad\int_{U(\theta)}\bra{0}^{\otimes 2n}U^{\dagger}C_0U\rho U^{\dagger}D_0U\ket{0}^{\otimes 2n}d\eta(U)\nonumber\\
	&=\int_{U(\theta)}\bra{0}^{\otimes 2n}(U_1^{\dagger}\otimes I_B) \cdot V^{\dagger}\cdot (U_2^{\dagger}\otimes U_3^{\dagger})\cdot C_0
	\cdot(U_2\otimes U_3)\cdot\rho_1\cdot(U_2^{\dagger}\otimes U_3^{\dagger})\cdot D_0\cdot(U_2\otimes U_3)
	\cdot V \nonumber\\
	&\quad\cdot (U_1\otimes I_B)\ket{0}^{\otimes 2n}\\
	&=\int_{\mathcal{U}_1(d)}\bra{0}^{\otimes 2n}(U_1^{\dagger}\otimes I_B) \cdot V^{\dagger}
	\cdot(t_0\rho_1+t_1\frac{\rho_{1}^A\otimes I_B}{d}+t_2\frac{I_A\otimes \rho_{1}^B}{d}
	+t_3 I_{AB}\tr(\rho_1))\cdot V\nonumber\\
	&\quad \cdot (U_1\otimes I_B)\ket{0}^{\otimes 2n}d\eta(U)\\
	&:=M_0+M_1+M_2+M_3\label{eq:U1U2},
\end{align}
where
\begin{align}
	M_0&=t_0\int_{\mathcal{U}_1(d)}\bra{0}^{\otimes 2n}(U_1^{\dagger}\otimes I_B) \cdot V^{\dagger}
	\cdot\rho_1\cdot V\cdot (U_1\otimes I_B)\ket{0}^{\otimes 2n}d\eta(U)\\
	& =t_0,\\
	M_1&=\frac{t_1}{d}\int_{\mathcal{U}_1(d)}\bra{0}^{\otimes 2n}(U_1^{\dagger}\otimes I_B) \cdot V^{\dagger}\cdot\rho_{1}^A\otimes I_B
	\cdot V\cdot (U_1\otimes I_B)\ket{0}^{\otimes 2n}d\eta(U),\\
	M_2&=\frac{t_2}{d}\int_{\mathcal{U}_1(d)}\bra{0}^{\otimes 2n}(U_1^{\dagger}\otimes I_B) \cdot V^{\dagger}\cdot I_A\otimes \rho_{1}^B
	\cdot V\cdot (U_1\otimes I_B)\ket{0}^{\otimes 2n}d\eta(U),\\
	M_3&=t_3\int_{\mathcal{U}_1(d)}\bra{0}^{\otimes 2n}(U_1^{\dagger}\otimes I_B) \cdot V^{\dagger}\cdot I_{AB}\tr(\rho_1)
	\cdot V\cdot (U_1\otimes I_B)\ket{0}^{\otimes 2n}d\eta(U)\\
	&=t_3,\\
	t_0&=t_3=\frac{1}{d^2(d+1)^2}, \\
	t_1&=t_2=\frac{1}{d(d+1)^2}.
\end{align}

Now the problem is how do we calculate $M_1$ and $M_2$. Combine with $V=\sum_{i=0}^{d-1}\ketbra{i}{i}_A\otimes V_{B_i}$, there has
\begin{align}
	M_1&=\frac{t_1}{d}\int_{\mathcal{U}_1(d)}\bra{0}^{\otimes 2n}(U_1^{\dagger}\otimes I_B) \cdot V^{\dagger}\cdot\rho_{1}^A\otimes I_B\cdot V\cdot (U_1\otimes I_B)\ket{0}^{\otimes 2n}d\eta(U)\\
	&=\frac{t_1}{d}\int_{\mathcal{U}_1(d)}\bra{0}^{\otimes 2n}(U_1^{\dagger}\otimes I_B)(\sum_{i=0}^{d-1}\ketbra{i}{i}\otimes V_{B_i}^{\dagger})
	\cdot(\sum_{k=0}^{d-1}\ketbra{k}{k}U_1\ketbra{0}{0}^{\otimes n}U_1^{\dagger}\ketbra{k}{k}\otimes I_{B})
	\nonumber\\
	&\quad
	\cdot (\sum_{j=0}^{d-1}\ketbra{j}{j}\otimes V_{B_j})(U_1\otimes I_B)\ket{0}^{\otimes 2n}d\eta(U)\\
	&=\frac{t_1}{d}\sum_{i=0}^{d-1}\int_{\mathcal{U}_1(d)}\bra{0}^{\otimes n}U_1^{\dagger}\ketbra{i}{i}U_1(\ketbra{0}{0})^{\otimes n}U_1^{\dagger}
	\ketbra{i}{i}U_1\ket{0}^{\otimes n}\cdot (\braket{0}{0})^{\otimes n}d\eta(U) \\
	&=\frac{2t_1}{d(d+1)},\label{eq:M1}\\
	M_2&=\frac{t_2}{d}\int_{\mathcal{U}_1(d)}\bra{0}^{\otimes 2n}(U_1^{\dagger}\otimes I_B) \cdot V^{\dagger}\cdot I_A\otimes\rho_{1}^B\cdot V\cdot (U_1\otimes I_B)\ket{0}^{\otimes 2n}d\eta(U)\\
	&=\frac{t_2}{d}\int_{\mathcal{U}_1(d)}\bra{0}^{\otimes 2n}(U_1^{\dagger}\otimes I_B)(\sum_{i=0}^{d-1}\ketbra{i}{i}\otimes V_{B_i}^{\dagger})
	\cdot(I_A\otimes\sum_{k=0}^{d-1}\bra{k}U_1\ketbra{0}{0}^{\otimes n}U_1^{\dagger}\ket{k} V_{B_k}\ketbra{0}{0}^{\otimes n}V_{B_k})
	\nonumber\\
	&\quad \cdot (\sum_{j=0}^{d-1}\ketbra{j}{j}\otimes V_{B_j})(U_1\otimes I_B)\ket{0}^{\otimes 2n}d\eta(U)\\
	&=\frac{t_2}{d}\sum_{i=0}^{d-1}\int_{\mathcal{U}_1(d)}\bra{0}^{\otimes n}U_1^{\dagger}\ketbra{i}{i}U_1(\ketbra{0}{0})^{\otimes n}U_1^{\dagger} \ketbra{i}{i}U_1\ket{0}^{\otimes n}\cdot (\braket{0}{0})^{\otimes n}d\eta(U)\\
	&=\frac{2t_2}{d(d+1)}.\label{eq:M2}
\end{align}
Therefore, 
\begin{align}
	{\rm Eq.}~\eqref{eq:U1U2}&=M_0+M_1+M_2+M_3\\
	&=\frac{2d+6}{d^2(d+1)^3}\\
	&\neq\frac{2}{d^2(d^2+1)}(\text{$n\geq1,d=2^n>1$})\\
	&={\rm Eq.}~\eqref{eq:2n-qubit}.
\end{align}
Thus, {\ansatz} with $U_i(i=1,2,3)$ being local 2-design ansatzes and $\BCNOT$ is not a unitary 2-design.
\end{proof}





\section{Proof of Proposition 3}
\label{appendix:trainability-analysis-sea}

In this section, we present a proof of Proposition~\ref{pro:var} by calculating the variance of the cost gradient of the SEA with local $2$-designs.
\renewcommand\theproposition{3}
\renewcommand{\theHproposition}{old3}
\setcounter{proposition}{\arabic{proposition}-1}
\begin{proposition}\label{pro:var}
For an SEA defined on $2N$ qubits with all sub-blocks being local $2$-designs, the variance of the cost gradient scales with the number of qubits as
\begin{equation}
	\var_{\rm SEA}[\partial_\mu C]\in\mathcal{O}(2^{-(N+M)}),
\end{equation}
where $M\in\{0,1,...,N\}$ denotes the number of $\CNOT$ gates used in the entangling layer and the variance is taken over all SEA sub-blocks independently.
\end{proposition}
\begin{proof}
Suppose $\mathbf{U} = ( U_2 \otimes U_3)\BCNOT(U_1\otimes I_B)$ is an SEA on a equally bipartite system $AB$ of dimension $d=d_Ad_B$ and $d_A=d_B=2^{N}$. Note that the $\BCNOT$ in $\mathbf{U}$ represent a collection of $M$ $\CNOT$ gates between subsystem $A$ and $B$. We focus on a single parameter $\theta$ within $U_1$ such that $U_1 = U_5 \left(e^{-i\Omega\theta}\right) U_4$. The case where $\theta$ locates at $U_2$ or $U_3$ can be analysed similarly. Denote $W=(U_2\otimes U_3)\BCNOT(U_5\otimes I_B)$, $V=U_4\otimes I_B$ and then we have $\mathbf{U}=W\left(e^{-i\Omega\theta}\otimes I_B\right)V$. The cost gradient becomes
\begin{equation}
	\begin{aligned}
		\frac{\partial C}{\partial \theta} 
		& = i \tr\left( V \rho V^\dagger \left[ \Omega\otimes I_B, \hat{W}^\dagger H \hat{W} \right]  \right),
	\end{aligned}
\end{equation}
where $\hat{W} = W \left( e^{-i\Omega\theta}\otimes I_B \right)$ and $\hat{U}_5 = U_5  \left( e^{-i\Omega\theta} \right)$. Suppose all $U_i$ are sampled from local $2$-designs. Since the tensor product of two $1$-designs is still a $1$-design, the expectation of the cost gradient is zero. Then consider the variance
\begin{equation}
	\var_{\rm SEA} \left[ \partial_\mu C \right] = - \mathbb{E}\left[ \left( \tr\left( V \rho V^\dagger \left[ \Omega\otimes I_B, \hat{W}^\dagger H \hat{W} \right] \right) \right)^2 \right],
\end{equation}
where the variance is integrated with respect to $U_2,U_3,U_4$ and $U_5$. We exploit the RTNI package~\cite{Fukuda2019} to calculate this integral. It turns out that the exact expression of the variance is dominant by
\begin{equation}\label{Eq:var_sea}
	\var_{\rm SEA} \left[ \partial_\mu C \right] \xrightarrow{d\rightarrow\infty} \frac{2 \tr\left( (\tr_B H)^2 \right) \tr\left(\Omega^2\right) }{d_A (d_A^2-1)^2 (d_B^2-1)} \cdot \left(
	\diagram{
		\def\r{0.5};
		\def\spacex{1.4};
		\def\spacey{1.4};
		\def\xc{0};
		\def\xct{\xc+2*\r};
		\def\xrt{\xc+5*\r};
		\def\xlt{\xc-\r};
		\def\yc{0};
		\def\yu{1*\spacey};
		\def\yuu{2*\spacey};
		\def\yuuu{3*\spacey};
		\coordinate (pc) at (\xct,\yc);
		\coordinate (pu) at (\xct,\yu);
		\coordinate (puu) at (\xct,\yuu);
		\coordinate (puuu) at (\xct,\yuuu);
		
		\coordinate (pxryuu) at (\xrt,\yuu);
		\coordinate (pxryu) at (\xrt,\yu);
		
		\coordinate (pxlyuu) at (\xlt, \yuu);
		\coordinate (pxlyu) at (\xlt, \yu);
		
		\draw (\xlt-\r,\yuuu-0.5*\r) .. controls (\xlt-1.5*\r,\yuuu-0.5*\r) and (\xlt-1.5*\r,\yuu+0.5*\r) .. (\xlt-\r,\yuu+0.5*\r);
		\draw (\xlt-\r,\yu-0.5*\r) .. controls (\xlt-1.5*\r,\yu-0.5*\r) and (\xlt-1.5*\r,\yc+0.5*\r) .. (\xlt-\r,\yc+0.5*\r);
		
		\draw (\xrt+\r,\yuuu-0.5*\r) .. controls (\xrt+1.5*\r,\yuuu-0.5*\r) and (\xrt+1.5*\r,\yuu+0.5*\r) .. (\xrt+\r,\yuu+0.5*\r);
		\draw (\xrt+\r,\yu-0.5*\r) .. controls (\xrt+1.5*\r,\yu-0.5*\r) and (\xrt+1.5*\r,\yc+0.5*\r) .. (\xrt+\r,\yc+0.5*\r);
		
		\draw (\xlt-\r,\yuu-0.5*\r) .. controls (\xlt-1.5*\r,\yuu-0.5*\r) and (\xlt-1.5*\r,\yu+0.5*\r) .. (\xlt-\r,\yu+0.5*\r);
		\draw (\xct-\r,\yuu-0.5*\r) .. controls (\xct-1.5*\r,\yuu-0.5*\r) and (\xct-1.5*\r,\yu+0.5*\r) .. (\xct-\r,\yu+0.5*\r);
		\draw (\xrt-\r,\yuu-0.5*\r) .. controls (\xrt-1.5*\r,\yuu-0.5*\r) and (\xrt-1.5*\r,\yu+0.5*\r) .. (\xrt-\r,\yu+0.5*\r);
		
		\draw (\xrt+\r,\yu+0.5*\r) .. controls (\xrt+1.5*\r,\yu+0.5*\r) and (\xrt+1.5*\r,\yuu-0.5*\r) .. (\xrt+\r,\yuu-0.5*\r);
		\draw (\xct+\r,\yu+0.5*\r) .. controls (\xct+1.5*\r,\yu+0.5*\r) and (\xct+1.5*\r,\yuu-0.5*\r) .. (\xct+\r,\yuu-0.5*\r);
		\draw (\xlt+\r,\yu+0.5*\r) .. controls (\xlt+1.5*\r,\yu+0.5*\r) and (\xlt+1.5*\r,\yuu-0.5*\r) .. (\xlt+\r,\yuu-0.5*\r);

		\draw (\xc-2*\r,\yuuu-0.5*\r) -- (\xrt+\r,\yuuu-0.5*\r);
		\draw (\xc-2*\r,\yuu+0.5*\r) -- (\xrt+\r,\yuu+0.5*\r);
		\draw (\xc-2*\r,\yu-0.5*\r) -- (\xrt+\r,\yu-0.5*\r);
		\draw (\xc-2*\r,\yc+0.5*\r) -- (\xrt+\r,\yc+0.5*\r);
		
		\draw[ten, shift=(puu)] (-\r,-\r) rectangle (\r,\r);
		\node at (puu) {\scriptsize $\rho^t$};
		\draw[ten, shift=(pu)] (-\r,-\r) rectangle (\r,\r);
		\node at (pu) {\scriptsize $\rho^*$};
		
		\draw[ten, shift=(pxryuu)] (-\r,-\r) rectangle (\r,\r);
		\node at (pxryuu) {\scriptsize $C^t$};
		\draw[ten, shift=(pxryu)] (-\r,-\r) rectangle (\r,\r);
		\node at (pxryu) {\scriptsize $C$};
		
		\draw[ten, shift=(pxlyuu)] (-\r,-\r) rectangle (\r,\r);
		\node at (pxlyuu) {\scriptsize $C^t$};
		\draw[ten, shift=(pxlyu)] (-\r,-\r) rectangle (\r,\r);
		\node at (pxlyu) {\scriptsize $C$};
	}
	\right),
\end{equation}
where the last factor is represented by the tensor network notation for convenience and $C$ is $\BCNOT$ for short. The superscript ``$*$'' denotes complex conjugate, and the superscript ``$t$'' denotes the rearrangement of the indices of $\mathcal{H}_A$, $\mathcal{H}_B$, i.e.,
\begin{equation}
	\left(
	\diagram{
		\def\r{0.9};
		\def\spacex{1.4};
		\def\spacey{1.4};
		\def\xc{0};
		\def\yc{0};
		\def\yu{0.5*\r};
		\def\yd{-0.5*\r};
		\coordinate (pc) at (\xc,\yc);
		\draw (\xc-2*\r,\yu) -- (\xc-\r,\yu);
		\draw (\xc+\r,\yu) -- (\xc+2*\r,\yu);
		\draw (\xc-2*\r,\yd) -- (\xc-\r,\yd);
		\draw (\xc+\r,\yd) -- (\xc+2*\r,\yd);
		\draw[ten, shift=(pc)] (-\r,-\r) rectangle (\r,\r);
		\node at (pc) {\normalsize $C$};
		\node at (\xc-3*\r,\yu) {\footnotesize $\mathcal{H}_A$};
		\node at (\xc+3*\r,\yu) {\footnotesize $\mathcal{H}_A$};
		\node at (\xc-3*\r,\yd) {\footnotesize $\mathcal{H}_B$};
		\node at (\xc+3*\r,\yd) {\footnotesize $\mathcal{H}_B$};
	}\right),~~\left(
	\diagram{
		\def\r{0.9};
		\def\spacex{1.4};
		\def\spacey{1.4};
		\def\xc{0};
		\def\yc{0};
		\def\yu{0.5*\r};
		\def\yd{-0.5*\r};
		\coordinate (pc) at (\xc,\yc);
		\draw (\xc-2*\r,\yu) -- (\xc-\r,\yu);
		\draw (\xc+\r,\yu) -- (\xc+2*\r,\yu);
		\draw (\xc-2*\r,\yd) -- (\xc-\r,\yd);
		\draw (\xc+\r,\yd) -- (\xc+2*\r,\yd);
		\draw[ten, shift=(pc)] (-\r,-\r) rectangle (\r,\r);
		\node at (pc) {\normalsize $C^t$};
		\node at (\xc-3*\r,\yu) {\footnotesize $\mathcal{H}_B$};
		\node at (\xc+3*\r,\yu) {\footnotesize $\mathcal{H}_B$};
		\node at (\xc-3*\r,\yd) {\footnotesize $\mathcal{H}_A$};
		\node at (\xc+3*\r,\yd) {\footnotesize $\mathcal{H}_A$};
	}\right).
\end{equation}
Note that a $2$-qubit $\CNOT$ gate contribute one loop in (\ref{Eq:var_sea}) and a $2$-qubit identity contributes two loops in (\ref{Eq:var_sea}). Each loop contributes a factor $2$. Considering the contributions from $\tr\left( (\tr_B H)^2 \right)\sim d_Ad_B^2$ and $\tr(\Omega^2)\sim d_A$, the dominant term shall scale with the number of qubits $2N$ as $\mathcal{O}(2^{-(N+M)})$. Note that here we suppose the Hamiltonian $H$ contains terms acting on $B$ trivially, which is reasonable for physical and chemical models with local interactions. Compared to $\mathcal{O}(2^{-2N})$ of the hardware-efficient ansatz forming a global $2$-design~\cite{mcclean2018barren,cerezo2021cost}, we find that the SEA could provide a square root advantage on the gradient magnitude with the number of qubits over the hardware-efficient ansatz with sufficient depths, while keeping the accuracy of expressing the low-entangled target states.
\end{proof}

\section{Proof of Proposition 4}\label{appendix:expressibility-analysis-sea}
In this section we provide a proof of Proposition~\ref{pro:frame potential} to estimate the expressibility of SEA, which is quantified by the so-called $t$-degree frame potential~\cite{sim2019expressibility}, i.e.,
\begin{equation}
\mathcal{F}^{(t)} = \mathbb{E}\left[ \left( \tr \left( \mathbf{U} \rho \mathbf{U}^\dagger \mathbf{V} \rho \mathbf{V}^\dagger \right) \right)^{t} \right].
\end{equation}
Here $\rho=\proj{0}$ is the zero state on a equally bipartite system $AB$ of dimension $d=d_Ad_B$ and $d_A=d_B=2^{N}$. The expectation is taken over two copies of the ansatz ensemble $\mathbf{U},\mathbf{V}\in\mathbb{U}$ independently. For comparison, we first show the results of the Haar measure. For $t=1$ and $t=2$, we have
\begin{align}
\mathcal{F}^{(1)}_{\text{Haar}} & = \mathbb{E}\left[ \tr \left( \mathbf{U} \rho \mathbf{U}^\dagger \mathbf{V} \rho \mathbf{V}^\dagger \right) \right] = \frac{\tr(\rho)}{d} \mathbb{E}\left[ \tr \left( \mathbf{V} \rho \mathbf{V}^\dagger \right) \right] = \frac{ \left( \tr \rho \right)^2 }{d} = \frac{1}{d},\\
\mathcal{F}^{(2)}_{\text{Haar}} & = \mathbb{E} \left[ \left( \tr \left( \mathbf{U} \rho \mathbf{U}^\dagger \mathbf{V} \rho \mathbf{V}^\dagger \right) \right)^2 \right] \\
& = \frac{1}{d^2-1} \left[ (\tr\rho)^2 \left((\tr \rho)^2 - \frac{\tr(\rho^2)}{d} \right) + \tr(\rho^2) \left(\tr(\rho^2) - \frac{ (\tr \rho)^2 }{d} \right) \right] \\
& = \frac{1}{d^2-1} \left( (\tr\rho)^4 + \tr(\rho^2)^2 - \frac{2(\tr\rho)^2 \tr(\rho^2)}{d} \right) = \frac{2}{d(d+1)}.
\end{align}
We can see that the second frame potential is exponentially small in the number of qubits $2N$ with scaling $O(2^{-4N})$. 

\renewcommand\theproposition{4}
\renewcommand{\theHproposition}{old4}
\setcounter{proposition}{\arabic{proposition}-1}
\begin{proposition}\label{pro:frame potential}
For an SEA defined on $2N$ qubits with all sub-blocks being local $2$-designs, the first and second frame potential satisfies
\begin{align}
	&\mathcal{F}^{(1)}_{\rm SEA} = 2^{-2N},\\
	&\mathcal{F}^{(2)}_{\rm SEA} \in \mathcal{O}(2^{-4N}).
\end{align}
\end{proposition}
\begin{proof}
For SEAs with sub-blocks being local $2$-designs, the $1$-degree frame potential is

\begin{align}
	\mathcal{F}^{(1)}_{\rm SEA} = &~\mathbb{E} \left[ \tr \left( \mathbf{U} \rho \mathbf{U}^\dagger \mathbf{V} \rho \mathbf{V}^\dagger \right) \right] \\
	= &~\mathbb{E} \Big[ \tr \Big( (U_2\otimes U_3) \CNOT (U_1\otimes I_B) \rho (U_1^\dagger \otimes I_B) \CNOT (U_2^\dagger \otimes U_3^\dagger) \nonumber\\
	&~(V_2\otimes V_3) \CNOT (V_1\otimes I_B) \rho (V_1^\dagger \otimes I_B) \CNOT (V_2^\dagger \otimes V_3^\dagger) \Big) \Big] \\
	= &~\mathbb{E} \Big[ \tr \Big( (U_2\otimes U_3) \CNOT (U_1\otimes I_B) \rho (U_1^\dagger \otimes I_B) \CNOT (U_2^\dagger \otimes U_3^\dagger)\nonumber \\
	&~ \CNOT (V_1\otimes I_B) \rho (V_1^\dagger \otimes I_B) \CNOT \Big) \Big], 
\end{align}
where we have used the unitary invariance of the Haar measure to eliminate $V_2$ and $V_3$. Integrating with respect to $V_1,U_1$, and then $U_2,U_3$ leads to
\begin{align}
	\mathcal{F}^{(1)}_{\rm SEA} = &~\frac{1}{d_A^2} \mathbb{E} \Big[ \tr \Big( (U_2\otimes U_3) \CNOT (I_A \otimes \tr_A(\rho) ) \CNOT (U_2^\dagger \otimes U_3^\dagger) \nonumber\\
	&~ \CNOT ( I_A\otimes \tr_A(\rho) ) \CNOT \Big) \Big] \\
	= &~ \frac{1}{d_A^3 d_B} \left( \tr \left( \CNOT (I_A \otimes \tr_A(\rho) ) \CNOT \right) \right)^2 = \frac{1}{d_A d_B} = \frac{1}{d}.
\end{align}
That is to say, the $1$-degree frame potential of SEA with local $2$-designs is optimal, which can be easily seen from the fact that SEA with local $2$-designs is an exact $1$-design. The second frame potential $\mathcal{F}^{(2)}$ can be calculated similarly. Similar with the proof in the last section, we again exploit the RTNI package~\cite{Fukuda2019} to calculate this integral. It turns out that the exact expression of $\mathcal{F}^{(2)}_{\rm SEA}$ with full $\CNOT$s is also dominant by
\begin{equation}\label{Eq:fp_sea}
	\mathcal{F}^{(2)}_{\rm SEA} \xrightarrow{d\rightarrow\infty} \frac{ 2 }{(d_A^2-1)^3 (d_B^2-1)} \cdot \left(
	\diagram{
		\def\r{0.5};
		\def\spacex{1.4};
		\def\spacey{1.4};
		\def\xc{0};
		\def\xct{\xc+2*\r};
		\def\xrt{\xc+5*\r};
		\def\xlt{\xc-\r};
		\def\yc{0};
		\def\yu{1*\spacey};
		\def\yuu{2*\spacey};
		\def\yuuu{3*\spacey};
		\coordinate (pc) at (\xct,\yc);
		\coordinate (pu) at (\xct,\yu);
		\coordinate (puu) at (\xct,\yuu);
		\coordinate (puuu) at (\xct,\yuuu);
		
		\coordinate (pxryuu) at (\xrt,\yuu);
		\coordinate (pxryu) at (\xrt,\yu);
		
		\coordinate (pxlyuu) at (\xlt, \yuu);
		\coordinate (pxlyu) at (\xlt, \yu);

		\draw (\xlt-\r,\yu-0.5*\r) .. controls (\xlt-2*\r,\yu-0.5*\r) and (\xlt-2*\r,\yuu+0.5*\r) .. (\xlt-\r,\yuu+0.5*\r);
		
		\draw (\xlt-\r,\yuu-0.5*\r) .. controls (\xlt-1.5*\r,\yuu-0.5*\r) and (\xlt-1.5*\r,\yu+0.5*\r) .. (\xlt-\r,\yu+0.5*\r);
		
		\draw (\xlt+\r,\yu+0.5*\r) .. controls (\xlt+1.5*\r,\yu+0.5*\r) and (\xlt+1.5*\r,\yuu-0.5*\r) .. (\xlt+\r,\yuu-0.5*\r);

		\draw (\xc-2*\r,\yuu+0.5*\r) -- (\xc+\r,\yuu+0.5*\r);
		\draw (\xc-2*\r,\yu-0.5*\r) -- (\xc+\r,\yu-0.5*\r);
		
		\filldraw[black,fill=tensorblue] (\xc+\r,\yuu+0.5*\r-0.5*\r)--(\xc+\r+\r,\yuu+0.5*\r)--(\xc+\r,\yuu+0.5*\r+0.5*\r)--(\xc+\r,\yuu+0.5*\r-0.5*\r);
		\node at (\xc+\r+0.3*\r,\yuu+0.5*\r) {\tiny $0$};
		\filldraw[black,fill=tensorblue] (\xc+\r,\yu-0.5*\r-0.5*\r)--(\xc+\r+\r,\yu-0.5*\r)--(\xc+\r,\yu-0.5*\r+0.5*\r)--(\xc+\r,\yu-0.5*\r-0.5*\r);
		\node at (\xc+\r+0.3*\r,\yu-0.5*\r) {\tiny $0$};
		
		\draw[ten, shift=(pxlyuu)] (-\r,-\r) rectangle (\r,\r);
		\node at (pxlyuu) {\scriptsize $C^t$};
		\draw[ten, shift=(pxlyu)] (-\r,-\r) rectangle (\r,\r);
		\node at (pxlyu) {\scriptsize $C$};
	}
	\right)^4 
	= \frac{ 2 d_A^4 }{(d_A^2-1)^3 (d_B^2-1)},
\end{equation}
of scaling $\mathcal{O}(2^{-4N})$, which has no difference with the Haar case in the sense of scaling, though the exact value is larger.
\end{proof}

This can be easily understood if we replace CNOT gates with identities as the tensor product ansatzes (TEN) in ~\cite{nakaji2021expressibility}, which is easier to compute compactly. Specifically, suppose the number of qubits $2N$ is divisible by $k$ so that the $2N$ qubits could be divided equally into $k$ subsystems with $2N/k$ qubits. The tensor product ansatz is defined by $\mathbf{U}=\bigotimes_{i=1}^{k} U_i$ with each unitary $U_i$ acting on each subsystem. If the ensembles of $U_i$ are all unitary $2$-designs, then the following equality hold
\begin{equation}
\mathcal{F}^{(2)}_{\text{\rm TEN}} = 2^{k-1} \cdot \frac{2^{2N}+1}{(2^{2N/k}+1)^{k}} \cdot \mathcal{F}^{(2)}_{\text{\rm Haar}}.
\end{equation}
For large $2N$ and fixed $k$, we have $\mathcal{F}^{(2)}_{\text{\rm TEN}}\approx 2^{k-1} \mathcal{F}^{(2)}_{\text{\rm Haar}}$ which means that there is no difference in the sense of scaling between the finite tensor product ensemble and the Haar ensemble. They both behave as $\mathcal{O}(2^{-4N})$. This partially explains why the second frame potential of the SEA can have the same scaling as the extreme one.

\section{Supplementary Description of Experiments}
\label{appendix:exp}
In this section, we present results from supplementary numerical experiments. Table~\ref{tab:param} shows the specific calculation method for the numbers of parameters. Fig.~\ref{fig:VQE-LiH} displays the results of VQE using the SEA of Schmidt coefficient layer as $R_y(\bm{\theta})^{\otimes 6}$, entangling layer as 6 CNOTs and the LBC layers as two $6$-qubit ALTs. Fig.~\ref{Fig: BP-sub.2} shows the variance of the largest partial derivative in each sample.

\label{app:exp}
\renewcommand{\thetable}{S\arabic{table}}
\begin{table}[htbp]
\centering
\caption{Comparison of the number of parameters in different ansatzes. $N$ represents the half of qubit number, $l_i(i=1,2,3)$ is the number of layer of $U_i$.}
\begin{ruledtabular}
	\begin{tabular}{cccc}
		& SEA & ALT & Random \\
		\colrule
		parameter number & $3N+6(N-1)l_1$ & $2N+2(2N-1)l_2$ &  $2Nl_3$
	\end{tabular}
\end{ruledtabular}
\label{tab:param}
\end{table}

\renewcommand{\thefigure}{S\arabic{figure}}
\begin{figure}[htbp]
\centering
\includegraphics[width=0.45\textwidth]{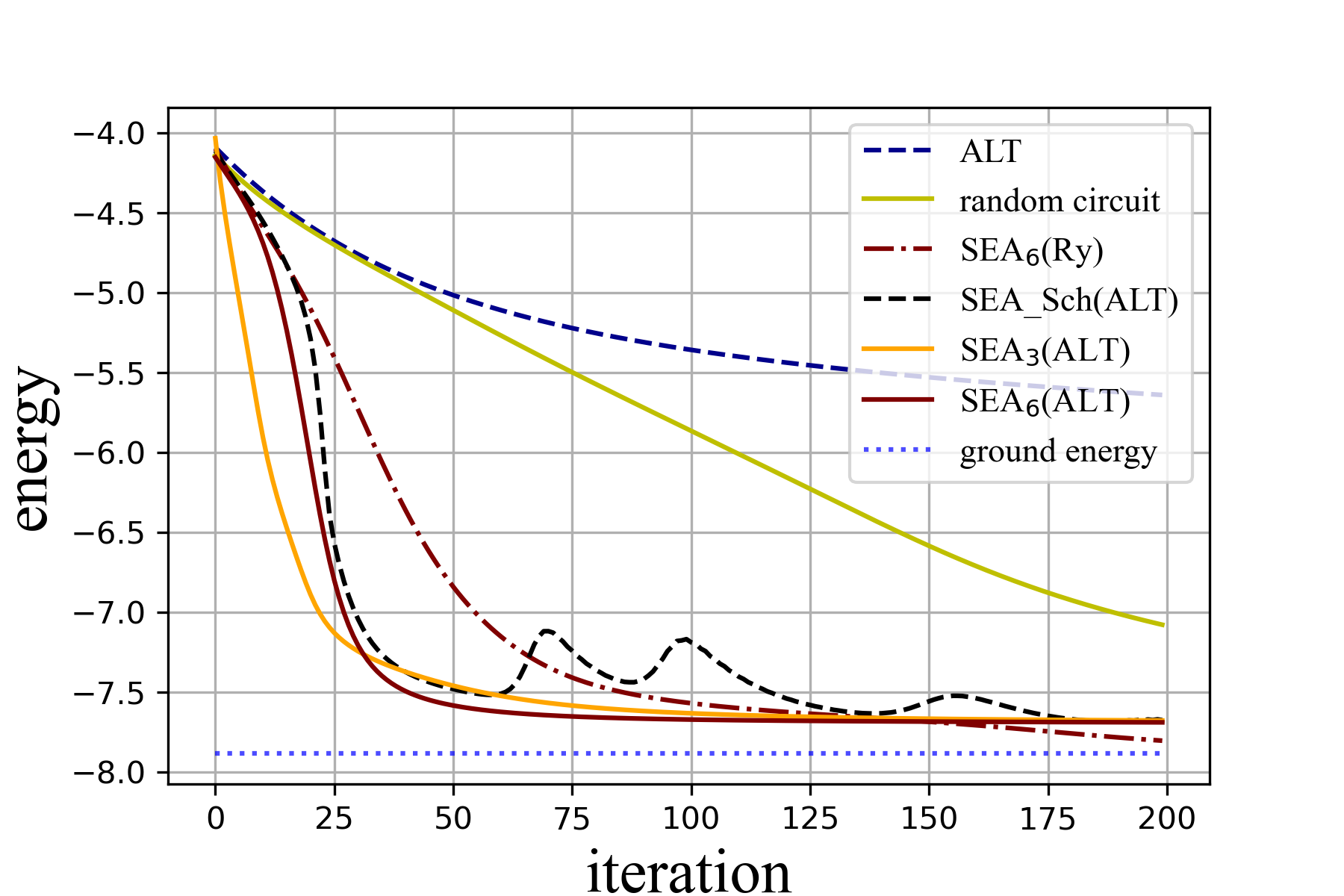}    
\caption{\textbf{Numerical experiment of VQE on LiH (12-qubit)}.The blue dotted line is the theoretical ground energy of LiH, and the lines from top to bottom represent the experimental results of ALT, the random circuit, {\ansatz} with $R_y(\bm{\theta})$ as $U_1$ and two ALTs as $U_2, U_3$ and $6$ CNOTs as entangling layer, SEA of Schmidt coefficient layer
	as subSEA, {\ansatz} with three ALTs as $U_i (i=1,2,3)$ and $3$ CNOTs as entangling layer, {\ansatz} with three ALTs as $U_i (i=1,2,3)$ and $6$ CNOTs as entangling layer, respectively. $j$($j=3, 6$) CNOTs means that we set a composition of $j \CNOT$s controlled and targeted on the qubit-pairs $\{(i, N+i)\}_{i=0}^{j-1}$.}
\label{fig:VQE-LiH}
\end{figure}

\begin{figure}[htbp]
\centering
\includegraphics[width=0.45\textwidth]{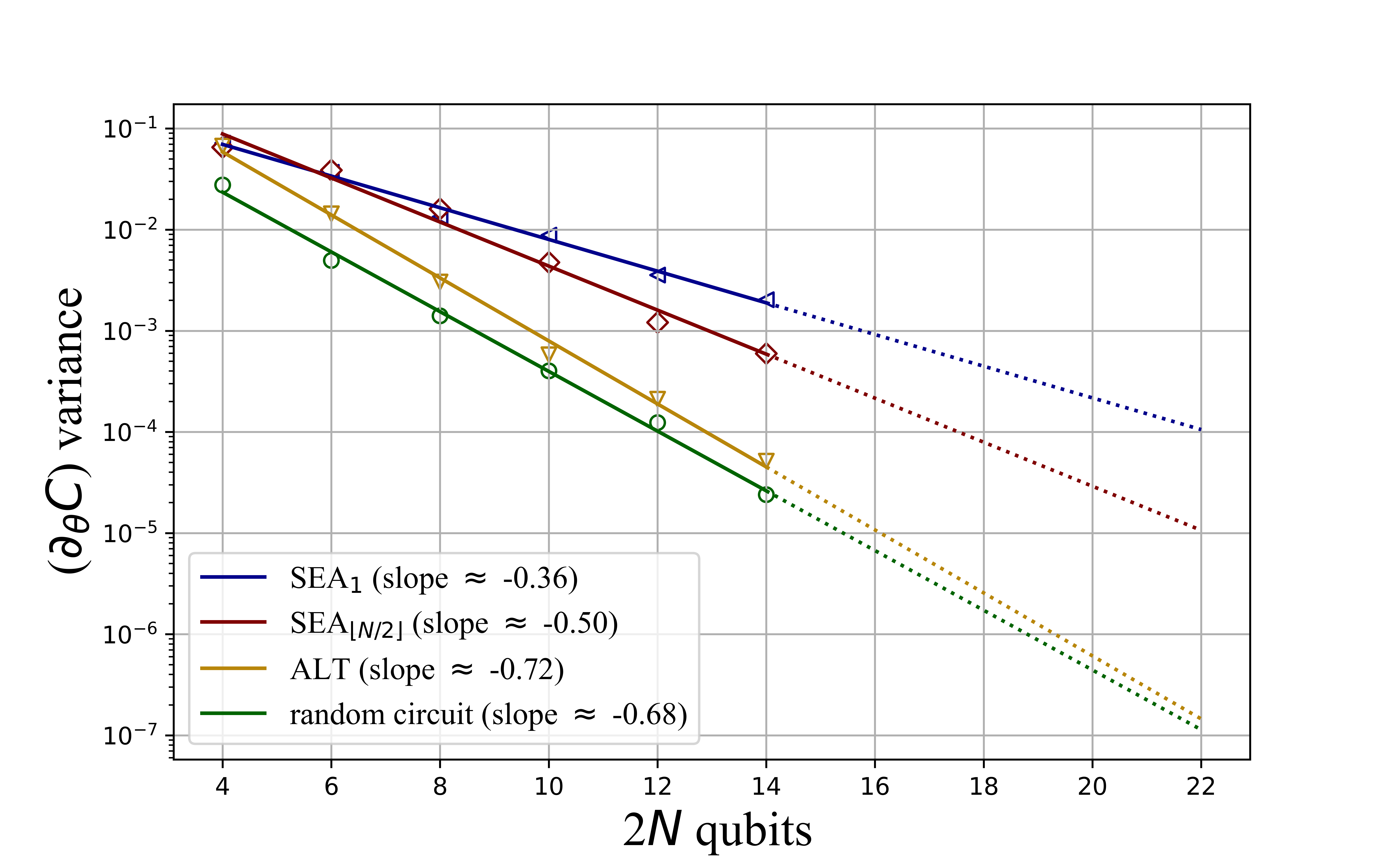}
\caption{\textbf{Comparison of the scaling of variance between different ansatzes on the Heisenberg model}. It shows the semi-log plot of the variance of the largest partial derivative among parameters in each round of sampling. We ensure different ansatzes have similar number of parameters by setting different depth. The solid part of the fitted lines represents the range we experimented with, while the dotted part represents the expected performance on a larger range.}
\label{Fig: BP-sub.2}
\end{figure}

\end{document}